%
%
%
%

\documentclass[runningheads,a4paper]{llncs}

\usepackage{amsmath, amssymb}
\setcounter{tocdepth}{3}
\usepackage{graphicx}
\usepackage{times} 
\usepackage{subfigure}
\usepackage{sidecap}
\usepackage{multirow} 
\usepackage{wrapfig}
\usepackage{capt-of}

\usepackage{pgfplots}
\pgfplotsset{compat=1.3}
\usepgfplotslibrary{groupplots} 
\usepgfplotslibrary[groupplots] 
\usetikzlibrary{pgfplots.groupplots} 
\usetikzlibrary[pgfplots.groupplots] 
\usetikzlibrary{patterns}

\usepackage{url}
\urldef{\mailsa}\path|{alfred.hofmann, ursula.barth, ingrid.haas, frank.holzwarth,|
\urldef{\mailsb}\path|anna.kramer, leonie.kunz, christine.reiss, nicole.sator,|
\urldef{\mailsc}\path|erika.siebert-cole, peter.strasser, lncs}@springer.com|

\DeclareMathOperator{\LCA}{LCA}

\begin{document}

\mainmatter  

\title{Inferring Species Trees from Incongruent Multi-Copy Gene Trees Using the Robinson-Foulds Distance}

\titlerunning{Inferring Species Trees from Gene Trees Using the Robinson-Foulds Distance}

%
%
\author{Ruchi Chaudhary\inst{1} \and J. Gordon Burleigh\inst{2} \and David  Fern\'{a}ndez-Baca\inst{1}}
%

\institute{Department of Computer Science, Iowa State University, Ames, IA 50011, USA
\and Department of Biology, University of Florida, Gainesville, FL 32611, USA}

%
%

\maketitle

\begin{abstract}
We present a new method for inferring species trees from multi-copy gene trees. Our method is based on a generalization of the Robinson-Foulds (RF) distance to multi-labeled trees (mul-trees), i.e., gene trees in which multiple leaves can have the same label. Unlike most previous phylogenetic methods using gene trees, this method does not assume that gene tree incongruence is caused by a single, specific biological process, such as gene duplication and loss, deep coalescence, or lateral gene transfer.  We prove that it is NP-hard to compute the RF distance between two mul-trees, but it is easy to calculate the generalized RF distance between a mul-tree and a singly-labeled tree. Motivated by this observation, we formulate the RF supertree problem for mul-trees (MulRF), which takes a collection of mul-trees and constructs a species tree that minimizes the total RF distance from the input mul-trees. We present a fast heuristic algorithm for the MulRF supertree problem.  Simulation experiments demonstrate that the MulRF method produces more accurate species trees than gene tree parsimony methods when incongruence is caused by gene tree error, duplications and losses, and/or lateral gene transfer.  Furthermore, the MulRF heuristic runs quickly on data sets containing hundreds of trees with up to a hundred taxa.
\end{abstract}

\section{Introduction}
With the development and spread of next generation sequencing technologies, there is great interest in incorporating large genomic data sets into phylogenetic inference. One challenge for such phylogenomic analyses is that genes sampled from the same set of species often produce conflicting trees \cite{maddison97}. Some of the incongruence may be due to errors in the phylogenetic analyses \cite{Swofford:1996:PI}.  The discordance also may reflect evolutionary events such as recombination, gene duplication, gene loss, deep coalescence, and lateral gene transfer (LGT) \cite{Avise:1983:MBE,Doyle:1992:SB,Goodman:1979:FTG,Maddison:1996,maddison97,Pamilo:1988:MBE}. Indeed, under certain conditions the most likely gene tree topology to evolve along a species tree will differ from the species tree~\cite{DegnanRosenberg2006}. Thus, in order to construct phylogenetic hypotheses from genomic data, it is necessary to address the incongruence among gene trees.

Approaches to inferring species from conflicting gene trees typically use a model of gene evolution that can reconcile the gene tree and species tree topologies. In practice, these models are usually based on a single evolutionary mechanism, such as duplication and loss or deep coalescence. Although these models greatly simplify the true processes of genome evolution, more complex and realistic models can quickly become unwieldy, making it hard or impossible to analyze large genomic data sets. In this paper, we take a step back and approach the question of finding a species tree for a given collection of gene trees though a method that is based on a tree distance metric and does not imply any specific evolutionary mechanism.

\paragraph{Previous Work.}
Existing methods for inferring species trees from collections of gene trees can be divided into two broad categories: non-parametric methods based on gene tree parsimony (GTP), and likelihood-based approaches \cite{Ane07012007,Kubatko04012009,Liu06012007}. GTP methods take a collection of discordant gene trees and try to find the species tree that implies the fewest evolutionary events. GeneTree \cite{Page:1998:GCG}, DupTree \cite{Wehe:bioinfo:2008}, and DupLoss \cite{Bansal:APBC:2010} seek to minimize the number of duplications or duplications and losses.  GeneTree \cite{Page:1998:GCG}, Mesquite \cite{maddison97}, PhyloNet \cite{Yu:recomb:11}, and the method of \cite{Bansal:APBC:2010} minimize deep coalescence events. The Subtree Prune and Regraft (SPR) supertree method \cite{Whidden:SPR:2012} is based on minimizing the number of LGT events.  Some of these methods are quite fast, enabling the analysis of very large data sets, but errors in the gene trees can mislead GTP analyses \cite{Burleigh:SB:2011,Huang:2009:WDA,Sanderson2007}. Also, in some cases GTP methods may be statistically inconsistent \cite{Than:2011:MDC}. Many of the likelihood-based methods use coalescence models to reconcile gene tree topologies \cite{Kubatko04012009,Liu06012007}. Although such likelihood-based approaches have a firm statistical basis, they often are computationally expensive.

While all the existing methods differ widely in their details, at a high level, except \cite{Ane07012007}, they all are based on potentially restrictive assumptions about  the source of discordance among gene trees.

\paragraph{Our Contributions.} We present a species tree inference technique that is not linked to any specific mechanism of gene tree discordance and has the scalability and accuracy expected for genome-wide analyses for many taxa.  Our method takes as input a collection of multi-labeled gene trees (mul-trees), trees where multiple leaves can have the same label, and finds a species tree at minimum ``distance'' to the input trees. The ability to use mul-trees as input, instead of being restricted to single copy genes, allows this method to incorporate the wealth of genomic data from multi-copy genes into phylogenetic inference, not only single-copy genes. Our distance measure is a generalization of the Robinson-Foulds (RF) distance to mul-trees. The RF distance has been useful as a supertree method for singly-labeled input trees \cite{Mukul:2010:RFS,Ruchi:2012:URF}, and in the singly-labeled setting, the distance based approach may be statistically consistent \cite{SteelRodrigo08}.

Our contributions are as follows:

\begin{itemize}
\item We study the problem of computing the RF distance between two mul-trees, and show that it is NP-hard (Section \ref{sec:prelim}).
  \item We formulate a RF supertree problem for mul-trees, which we call MulRF, that takes a collection of mul-trees as input and constructs a supertree that is at minimum RF distance from each input mul-tree (Section \ref{sec:model}).  A key component of this approach is a simple and efficient technique to compute the RF distance between an input mul-tree and a \emph{singly-labeled} species tree.  (Note the contrast with the previously-mentioned NP-hardness result.)
  \item We provide a fast heuristic algorithm for the MulRF problem (Section \ref{sec:solution}).  Heuristics are needed for this problem because it is NP-hard. 
  \item We implemented the MulRF algorithm and performed experiments on complex gene tree simulations (Section \ref{sec:experiment}).
  \end{itemize}

Simulation experiments allow us to evaluate the accuracy of our method by comparing it against the true species tree, something that cannot be done on real data. We compared the supertrees constructed by MulRF and GTP methods that consider only duplication \cite{Wehe:bioinfo:2008}, duplication and loss \cite{Bansal:APBC:2010}, and only LGT \cite{Whidden:SPR:2012} with the true species trees.  Likelihood-based methods were not considered because the simulated gene trees were comparatively large in size for these methods and no likelihood-based phylogenetic method deals explicitly with duplication and loss or LGT . In all experiments, MulRF produced trees that are more similar to the true species trees than those obtained by other three methods. Further, our algorithm ran quickly on moderate-size data sets, finishing in under two minutes on  data sets containing 300 gene trees evolved over 100 taxon species trees, suggesting it is scalable for large-scale phylogenomic analyses.


\section{Preliminaries}
\label{sec:prelim}
A \emph{phylogenetic tree} or \emph{tree} is an unrooted, leaf-labeled tree in which all the internal vertices have degree of at least three \cite{Semple:2003:phy}. The leaf set of $T$ is denoted by $\mathcal{L}(T)$. The set of all vertices of $T$ is denoted by $V(T)$ and the set of all edges by $E(T)$. The set of all internal vertices of $T$ is $I(T) := V(T) \backslash \mathcal{L}(T)$.
A tree is \emph{binary} if every internal vertex has degree three. Let $U$ be a subset of $V(T)$. We denote by $T[U]$ the minimum subtree of $T$ that connects the elements in $U$. The \emph{restriction} of $T$ to $U$, denoted by $T_{|U}$, is the phylogenetic tree that is obtained from $T[U]$ by suppressing all vertices of degree two.

Two trees $T_1$ and $T_2$ are isomorphic if there exists a bijection $\tau : V(T_1) \rightarrow V(T_2)$ such that $\{u,v\} \in E(T_1)$ if and only if  $\{\tau(u),\tau(v)\} \in E(T_2)$ for all $\{u,v\} \in {V(T_1) \choose 2}$.

The \emph{contraction} of an edge in a tree collapses that edge and identifies its two endpoints. The \emph{refinement} of an unresolved vertex (i.e., an internal vertex with
degree greater than three) expands that  vertex into two vertices connected by an edge.  Contraction and refinement can be viewed as inverses of each other (Fig. \ref{conref}).

\begin{figure}[t]
\vspace*{-0.05in}
 \centering
 \includegraphics[width=3in]{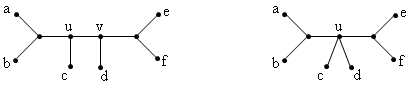}
 \vspace*{-0.18in}
 \caption{The contraction of edge $\{u,v\}$ in the first tree produces the second tree; conversely, the refinement of vertex $u$ in the second tree produces the first tree.}
 \label{conref}
 \vspace*{-0.2in}
 \end{figure}

The \emph{Robinson-Foulds (RF)} distance between two trees $T_1$ and $T_2$, denoted by $RF(T_1,$ $T_2)$, is the minimum number of contractions and refinements necessary to transform $T_1$ into a tree isomorphic to $T_2$ \cite{Robinson:1981:CPT}. The RF distance between two trees can be equivalently defined via \emph{splits}. A split $A|B$ is a bipartition of the leaf set of a tree; $A$ and $B$ are the \emph{parts} of split $A|B$. The set of all splits induced by the internal edges of a tree $T$ is denoted by $\Sigma(T)$. Now for $T_1$ and $T_2$ \cite{Robinson:1981:CPT}, $$RF(T_1, T_2) := |(\Sigma(T_1) \backslash \Sigma(T_2)) \cup (\Sigma(T_2) \backslash \Sigma(T_1)) |.$$
Two trees $T_1$ and $T_2$ are isomorphic if $\Sigma(T_1) = \Sigma(T_2)$ \cite[page 44]{Semple:2003:phy}.

A \emph{phylogenetic mul-tree} or \emph{mul-tree}, is a tuple $\mathcal{T} = (T,M,\varphi)$ consisting of an unrooted tree $T$, a set of labels $M$, and a surjective \emph{labeling function} $\varphi: \mathcal{L}(T) \rightarrow M$ that maps each leaf of $T$ with a label in $M$. Informally, a mul-tree is simply a phylogeny in which multiple leaves can have the same label (see Fig. \ref{counter}). For any label $\ell \in M$, $\varphi^{-1}(\ell)$ is the set of all leaves labeled $\ell$. If $\varphi$ is a bijection, the corresponding mul-tree is just a (singly-labeled) tree. In this paper, we use the traditional notation for a tree when the given mul-tree is clearly a tree.

 The concepts introduced above for unrooted trees naturally extend to mul-trees. For example, a mul-tree $\mathcal{T} = (T,M,\varphi)$ is binary if $T$ is binary. Two mul-trees $\mathcal{T}_1 = (T_1,M,\varphi_1)$ and $\mathcal{T}_2 = (T_2,M,\varphi_2)$ are isomorphic if $T_1$ and $T_2$ are isomorphic under bijection $\tau : V(T_1) \rightarrow V(T_2)$ such that $\varphi_1(u) = \varphi_2(\tau(u))$ for all $u \in \mathcal{L}(T_1)$.

\begin{figure}
\vspace*{-0.3in}
 \centering
 \includegraphics[width=3in]{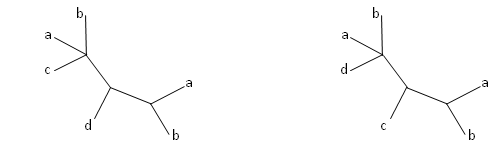}
 \vspace*{-0.2in}
 \caption{Two mul-trees that induce the same set of splits but are not isomorphic.}
 \label{counter}
 \vspace*{-0.25in}
 \end{figure}

The contraction and refinement based RF distance metric naturally extends to mul-trees \cite{Ganapathy:2006:PIB}. However, unlike singly-labeled trees, it is possible for two mul-trees $\mathcal{T}_1$ and $\mathcal{T}_2$ to satisfy $\Sigma(\mathcal{T}_1) = \Sigma(\mathcal{T}_2)$ and yet not be isomorphic (see Fig. \ref{counter}). Thus, the RF distance between two mul-trees cannot be computed by splits. Ganapathy et al.~gave a worst-case exponential time algorithm for computing the RF distance between two mul-trees \cite{Ganapathy:2006:PIB}.  The next result suggests that a polynomial time algorithm is unlikely.

\begin{theorem}\label{thm:NP} Computing the RF distance between two mul-trees is NP-hard.
\footnote{The proofs of this and other results are in the Appendix.}
\end{theorem}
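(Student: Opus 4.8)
\medskip
\noindent\textbf{Proof proposal.}
The plan is to first turn the computation of $RF(\mathcal{T}_1,\mathcal{T}_2)$ into an optimization over leaf bijections, and then to encode an NP-hard problem into that optimization. \emph{For the reformulation:} contractions and refinements only change the internal structure of a tree and never move or relabel leaves, so any sequence of such operations transforming $\mathcal{T}_1=(T_1,M,\varphi_1)$ into a mul-tree isomorphic to $\mathcal{T}_2=(T_2,M,\varphi_2)$ determines, via the witnessing isomorphism, a label-preserving bijection $\pi\colon\mathcal{L}(T_1)\to\mathcal{L}(T_2)$, i.e.\ $\varphi_2(\pi(u))=\varphi_1(u)$ for every leaf $u$. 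Conversely, given such a $\pi$, relabel each leaf $v$ of $T_2$ by $\pi^{-1}(v)$ to obtain an ordinary tree $\pi(T_2)$ on leaf set $\mathcal{L}(T_1)$; the cheapest transformation of $T_1$ that uses $\pi$ costs exactly $RF(T_1,\pi(T_2))$ by the Robinson--Foulds theorem quoted above \cite{Robinson:1981:CPT}. Hence
\[
  RF(\mathcal{T}_1,\mathcal{T}_2)=\min_{\pi}RF\bigl(T_1,\pi(T_2)\bigr)
    =\min_{\pi}\bigl|\,\Sigma(T_1)\,\triangle\,\Sigma(\pi(T_2))\,\bigr|,
\]
the minimum ranging over all label-preserving leaf bijections $\pi$. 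When $T_1$ and $T_2$ are binary with $n$ leaves, $|\Sigma(T_1)|=|\Sigma(\pi(T_2))|=n-3$, so this equals $2(n-3)-2\max_{\pi}|\Sigma(T_1)\cap\Sigma(\pi(T_2))|$; computing $RF$ is then equivalent to finding a label-preserving bijection that maximizes the number of shared splits. (Enumerating all such bijections already gives a worst-case exponential algorithm; cf.~\cite{Ganapathy:2006:PIB}.)

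\emph{For the reduction:} I would reduce a standard NP-complete problem --- 3-SAT is my first candidate, though a graph problem with a cut/consistency flavour such as Max-Cut would work equally well --- to the decision question ``is there a label-preserving bijection with at least $k$ shared splits?''. Label classes of multiplicity two serve as Boolean toggles: for a class $\{x,x'\}$ the bijection either fixes both leaves or swaps them (true vs.\ false), while every other label is used once, so the rest of $\pi$ is forced. I would build $T_1$ and $T_2$ from gadgets --- one variable gadget per class and one gadget per clause --- glued so that (a) a large ``skeleton'' set of splits is shared no matter what $\pi$ does, and (b) for each clause, one designated split of $T_1$ reappears in $\Sigma(\pi(T_2))$ exactly when the toggle settings satisfy that clause. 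Taking $k$ to be the skeleton size plus the number of clauses then makes ``$RF$ small'' equivalent to satisfiability; soundness uses the reformulation to argue that any non-toggle behaviour of $\pi$ (made possible only by repeated labels) loses at least as many splits as it could gain, so an optimal $\pi$ may be assumed to be a pure toggle assignment.

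The crux, and the only part that is not routine, is the gadget construction. I must make the toggles \emph{rigid}, so that swapping the two copies in a variable gadget perturbs only the splits along intended ``wires'' and creates or destroys nothing else --- this forces the gadgets to be deliberately asymmetric --- and I must \emph{control cross-gadget splits}, since a leaf bijection is a single global object and a split of $T_1$ separating leaves from different gadgets could in principle let a clever $\pi$ beat the threshold in an unintended way. Pinning down $\max_{\pi}|\Sigma(T_1)\cap\Sigma(\pi(T_2))|$ exactly as (skeleton size) $+$ (maximum number of satisfiable clauses) is where the work concentrates; once that is verified, polynomial-time constructibility of the instance and the membership direction are immediate.
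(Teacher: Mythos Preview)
Your reformulation of $RF(\mathcal{T}_1,\mathcal{T}_2)$ as $\min_\pi RF(T_1,\pi(T_2))$ over label-preserving leaf bijections is correct and is precisely the characterization of Ganapathy et al.\ that the paper itself invokes. After that point, however, what you have is a strategy rather than a proof: you name the gadget construction as ``the crux, and the only part that is not routine'' and then list the properties it must satisfy (rigid toggles, controlled cross-gadget splits, exact accounting of $\max_\pi|\Sigma(T_1)\cap\Sigma(\pi(T_2))|$) without actually building anything. For an NP-hardness argument the reduction \emph{is} the content; leaving it as a specification is the gap. In particular, arranging that a designated split of $T_1$ reappears in $\Sigma(\pi(T_2))$ exactly when a disjunction of three independent two-way toggles holds, while no unintended cross-gadget split is gained or lost, is genuinely delicate, and you have given no construction that achieves it.

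The paper takes a rather different route. It reduces from \textsc{Exact Cover by 3-Sets} (in the restricted version where every element lies in exactly three triples), not from SAT, and its gadgets are not Boolean toggles. For each ground-set element $s_i$ it builds two large binary subtrees $T,T'$ on $k+2\ge n^2$ uniquely labeled leaves with $RF(T,T')=2k$ (no shared clusters). The mul-tree $\mathcal{T}_1$ hangs one copy of $T$ per element on the left and three copies of $T'$ per triple on the right, separated by ``toll'' caterpillars of $k$ singleton-label leaves; $\mathcal{T}_2$ has the same skeleton with the roles of $T$ and $T'$ reversed. Transforming $\mathcal{T}_1$ into $\mathcal{T}_2$ can be done either by converting every $T\leftrightarrow T'$ in place, or more cheaply by contracting and re-refining a single toll sequence to swap a whole group of three across at cost $2(k+1)$. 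An exact cover permits exactly $q$ such swaps plus $6q$ residual in-place conversions, yielding the threshold $\kappa=2q(k+1)+12kq$; the choice $k\ge n^2$ forces any non-cover strategy to cost strictly more. This coarse ``swap versus convert'' accounting sidesteps the fine-grained split bookkeeping your SAT approach would require, at the price of larger trees and higher label multiplicity.
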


\section{MulRF Supertrees}

\label{sec:model}

A \emph{profile} is a tuple of mul-trees $\mathcal{P} := (\mathcal{T}_1, \mathcal{T}_2,...,\mathcal{T}_k)$, also called \emph{input trees}, where  $\mathcal{T}_i = (T_i,M_i,\varphi_i)$ for each $i \in \{1,\dots, k\}$. A \emph{supertree} on $\mathcal{P}$ is a singly-labeled phylogenetic tree $S$ such that $\mathcal{L}(S) = \bigcup _{i = 1}^{k} M_i$. We write $n$ to denote $|\mathcal{L}(S)|$, the total number of distinct leaves in the profile. In this paper, we assume that the size of each input mul-tree differs only by a constant factor from the size of the resulting supertree.

We extend the notion of RF distance to the case where $\mathcal{L}(T_1) \subseteq \mathcal{L}(T_2)$ by letting $RF(T_1, T_2) := RF(T_1, {T_2}_{|\mathcal{L}(T_1)})$.
We define the \emph{RF distance} from a profile $\mathcal{P}$ to a supertree $S$ for $\mathcal{P}$ as $RF(\mathcal{P},S) := \sum _{\mathcal{T} \in \mathcal{P}} RF(\mathcal{T},S)$.

Let $\mathcal{B}(\mathcal{P})$ be the set of all binary supertrees for $\mathcal P$.

\begin{problem} [RF Supertree for MUL-Trees (MulRF)]  \\
\textit{Input:} A profile $\mathcal{P} = (\mathcal{T}_1, \mathcal{T}_2,...,\mathcal{T}_k)$ of unrooted mul-trees. \\
\textit{Output:} A supertree $S$* for $\mathcal{P}$ such that $RF(\mathcal{P},S\text{*}) =  \min_{S \in \mathcal{B}(\mathcal{P})} RF(\mathcal{P},S)$.  \end{problem}

The MulRF problem is NP-hard even when all the input mul-trees are singly-labeled trees on the same leaf set \cite{McMorris:Steel:93}.  In fact, as stated in Theorem~\ref{thm:NP}, just computing the RF distance between two mul-trees is hard.  Nevertheless, we now show that it is straightforward to compute the RF distance between an input mul-tree and a supertree.

Let $\mathcal{T} = (T,M,\varphi)$ be an input mul-tree and $S$ be a supertree, where $M \subseteq \mathcal{L}(S)$. The \emph{extended supertree} is the mul-tree $\mathcal{S}$ constructed from $S$ by replacing each $a \in \mathcal{L}(S)$ by an internal node connecting to $k$ leaves labeled with $a$, where $k := |\varphi^{-1}(a)| > 1$. See Fig. \ref{fig:ext}.
A \emph{full differentiation} of $\mathcal{T}$ is a leaf labeled tree $\mathbf{T}$ such that $T$ and $\mathbf{T}$ are isomorphic.

Let $\mathcal{T} = (T,M,\varphi)$ and $\mathcal{S} = (T',M',\varphi')$ be two unrooted mul-trees.  Two  full differentiations $\mathbf{T}$ and $\mathbf{S}$ of $\mathcal{T}$ and $\mathcal{S}$, respectively, are \emph{consistent} if  for each $a \in M \cap M'$, $\tau_1(\varphi^{-1}(a))=\tau_2(\varphi'^{-1}(a))$, where $T$ and $\mathbf{T}$ are isomorphic under bijection
$\tau_1 : V(T) \rightarrow V(\mathbf{T})$ and $T'$ and $\mathbf{S}$ are isomorphic under bijection $\tau_2 : V(T') \rightarrow V(\mathbf{S})$. For instance,
a consistent full differentiation can be obtained by relabeling each of the $k$ copies of each leaf label $a$ by $a_1, a_2, \dots ,a_k$ in both the trees.

%



\begin{figure}
\vspace*{-0.25in}
\begin{minipage}[b]{0.3\linewidth}
\centering
$\mathcal{T}$ \end{minipage}
\begin{minipage}[b]{0.3\linewidth}
\centering
$S$\end{minipage}
\begin{minipage}[b]{0.4\linewidth}
\centering
$\mathcal{S}$\end{minipage}

\begin{minipage}[b]{0.3\linewidth}
\centering
\raisebox{0.2\height}{\includegraphics*[width=0.7\textwidth]{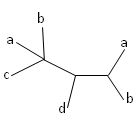}}
\end{minipage}
\begin{minipage}[b]{0.3\linewidth}
\centering
\raisebox{0.3\height}{\includegraphics*[width=0.7\textwidth]{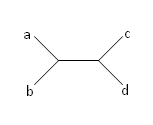}}
\end{minipage}
\raisebox{15\height}{$\Longrightarrow$}
\begin{minipage}[b]{0.3\linewidth}
\centering
\raisebox{0.3\height}{\includegraphics*[width=0.6\textwidth]{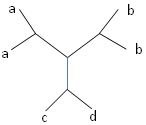}}
\end{minipage}
\label{fig:ext}
\vspace*{-0.4in}
\caption{Input mul-trees $\mathcal{T}$ and the supertree $S$. The extended supertree $\mathcal{S}$ is also shown.}
\vspace*{-0.2in}
\end{figure}

\begin{theorem}[\cite{Ganapathy:2006:PIB}] Let $\mathcal{T}$ and $\mathcal{S}$ be two mul-trees. Then, $RF(\mathcal{T},\mathcal{S}) = \min\{RF(\mathbf{T},\mathbf{S}): \mathbf{T}$ and $\mathbf{S}$ are mutually consistent full differentiations of $\mathcal{T}$ and $\mathcal{S}$, respectively$\}$.
\end{theorem}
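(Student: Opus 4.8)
My plan is to prove the two inequalities separately, in each direction by exhibiting a single sequence of contractions and refinements that attains the bound. The one fact driving everything is elementary: a contraction or a refinement acts only on internal edges and internal vertices of a tree and never creates, deletes, or relabels a leaf. So if $e$ denotes the \emph{de-differentiation} map that erases the copy index from each refined label ($a_i \mapsto a$) and leaves internal structure untouched, then $e$ commutes with contractions and refinements, carries every non-trivial operation to a non-trivial operation of the same type, and sends a label-preserving tree isomorphism to a label-preserving mul-tree isomorphism. Consistency of a pair $(\mathbf{T},\mathbf{S})$ is precisely what guarantees that one and the same $e$ satisfies $e(\mathbf{T}) = \mathcal{T}$ and $e(\mathbf{S}) = \mathcal{S}$. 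I will also assume that $\mathcal{T}$ and $\mathcal{S}$ carry the same multiset of labels; otherwise $RF(\mathcal{T},\mathcal{S})$ is, by the convention of the previous section, computed on the common restriction, and the argument applies after that reduction.

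\textbf{The inequality $RF(\mathcal{T},\mathcal{S}) \le RF(\mathbf{T},\mathbf{S})$ for a consistent pair.} I would fix mutually consistent full differentiations $\mathbf{T}, \mathbf{S}$ and an optimal sequence $\rho$ of $RF(\mathbf{T},\mathbf{S})$ operations transforming $\mathbf{T}$ into a tree isomorphic to $\mathbf{S}$, and then push $\rho$ through $e$ step by step. By the properties of $e$ above this yields a sequence of the same length transforming $e(\mathbf{T}) = \mathcal{T}$ into a mul-tree isomorphic to $e(\mathbf{S}) = \mathcal{S}$, whence $RF(\mathcal{T},\mathcal{S}) \le |\rho| = RF(\mathbf{T},\mathbf{S})$. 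Minimising over all consistent pairs gives $RF(\mathcal{T},\mathcal{S}) \le \min\{\dots\}$.

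\textbf{The reverse inequality.} I would start from an optimal sequence $\pi$ of $r := RF(\mathcal{T},\mathcal{S})$ operations carrying $\mathcal{T}$ to a mul-tree $\mathcal{T}^{(r)}$ that admits a mul-tree isomorphism $\sigma$ onto $\mathcal{S}$; restricted to leaves, $\sigma$ is a label-preserving bijection $\mathcal{L}(T) \to \mathcal{L}(T')$, since $\pi$ leaves leaf sets intact. I then pick any full differentiation $\mathbf{S}$ of $\mathcal{S}$ and transport it back along $\sigma$ --- giving leaf $v$ of $T$ the refined label that $\sigma(v)$ carries in $\mathbf{S}$ --- which defines a full differentiation $\mathbf{T}$ of $\mathcal{T}$; the pair $(\mathbf{T},\mathbf{S})$ is consistent because for every shared label $a$ the set $\varphi^{-1}(a) = \sigma^{-1}(\varphi'^{-1}(a))$ receives exactly the refined labels assigned to $\varphi'^{-1}(a)$. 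Lifting $\pi$ to $\mathbf{T}$ (each operation lifts uniquely, since $e$ is a bijection on internal structure and on leaf positions) produces a tree with the topology of $\mathcal{T}^{(r)}$ whose leaf $v$ carries the refined label of $\sigma(v)$, so $\sigma$ is now itself a refined-label-preserving isomorphism onto $\mathbf{S}$. Hence $RF(\mathbf{T},\mathbf{S}) \le r = RF(\mathcal{T},\mathcal{S})$, and combining the two inequalities proves the theorem. I do not expect a genuine obstacle here: the work lies almost entirely in setting up notation so that ``replay the same operations'' and ``de-differentiate / transport a differentiation'' are literally well defined and visibly commute, and in checking that no reading of a sequence wastes or gains a step (refinements stay non-trivial, contractions stay real); the reduction to equal label multisets is the one extra bookkeeping nuisance.
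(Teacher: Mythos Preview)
The paper does not supply its own proof of this theorem; the result is quoted from Ganapathy et al.\ and used as a black box, so there is no in-paper argument to compare against. Your argument is a correct direct proof: the observation that contractions and refinements act only on the internal structure of a tree, and hence commute with the de-differentiation map $e$, is exactly what drives both inequalities. In the forward direction, pushing an optimal sequence through $e$ preserves its length and the type of each step; in the reverse direction, your use of the leaf bijection induced by the terminal mul-tree isomorphism $\sigma$ to transport a differentiation of $\mathcal{S}$ back to $\mathcal{T}$ is the natural way to manufacture a consistent pair realizing the bound. The only caveat you already flag --- that the statement tacitly assumes the two mul-trees share the same multiset of labels (otherwise no sequence of contractions and refinements connects them and no consistent pair has a common leaf set) --- is indeed needed, and the paper's applications always satisfy it by construction of the extended supertree.
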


\begin{theorem} \label{thm:diffr} Let $\mathcal{T}$ be an input mul-tree and $\mathcal{S}$ be the extended supertree. Then, all mutually consistent full differentiations of $\mathcal{T}$ and $\mathcal{S}$ give the same RF distance. \end{theorem}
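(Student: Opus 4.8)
The plan is to derive the statement from two elementary facts: that the Robinson--Foulds distance between two leaf-labelled trees on a common label set is invariant under relabelling both trees by the same permutation of the labels, and that the extended supertree $\mathcal{S}$ is so rigid that, once the blocks of copies have been fixed, all of its full differentiations are mutually isomorphic. (I may assume $\mathcal{L}(S)=M$; otherwise one first restricts $\mathcal{S}$ to the labels occurring in $\mathcal{T}$, which turns it into the extended supertree built from $S_{|M}$ and changes neither $RF(\mathcal{T},\mathcal{S})$ nor the set of consistent full differentiations.)

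First I would set up coordinates. After renaming, the differentiated trees may be taken to share one fixed label set $L$, equipped with a partition into blocks $L=\bigsqcup_{a\in M}B_a$ with $|B_a|=k_a:=|\varphi^{-1}(a)|$; a mutually consistent pair $(\mathbf{T},\mathbf{S})$ is then specified by a bijection $\mathcal{L}(T)\to L$ mapping $\varphi^{-1}(a)$ onto $B_a$ for each $a$, together with a bijection from the leaves of the tree underlying $\mathcal{S}$ to $L$ mapping the leaves of the star that replaces $a$ onto $B_a$. Now take two consistent pairs $(\mathbf{T}_1,\mathbf{S}_1)$ and $(\mathbf{T}_2,\mathbf{S}_2)$. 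Regarding $\mathbf{T}_1,\mathbf{T}_2$ as the tree $T$ carrying bijective labellings $\ell_1,\ell_2:\mathcal{L}(T)\to L$, set $\sigma:=\ell_2\circ\ell_1^{-1}$, a permutation of $L$. Since $\ell_1(\varphi^{-1}(a))=\ell_2(\varphi^{-1}(a))=B_a$, this $\sigma$ maps each block $B_a$ onto itself, and by construction relabelling $\mathbf{T}_1$ by $\sigma$ yields exactly $\mathbf{T}_2$.

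The crux is to compare $\sigma\mathbf{S}_1$ with $\mathbf{S}_2$. Both are full differentiations of $\mathcal{S}$ using the label set $L$ with the same blocks --- this is where $\sigma$ being block-preserving enters --- so they differ only in how, for each multi-copy label $a$, the $k_a$ leaves of the star replacing $a$ are matched with $B_a$. But those leaves are pairwise interchangeable leaves of a single star, so any permutation of them extends to an automorphism of the tree underlying $\mathcal{S}$; combining these automorphisms over all $a$ produces a label-preserving isomorphism between $\sigma\mathbf{S}_1$ and $\mathbf{S}_2$, whence $\Sigma(\sigma\mathbf{S}_1)=\Sigma(\mathbf{S}_2)$. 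Assembling the pieces, $RF(\mathbf{T}_2,\mathbf{S}_2)=RF(\mathbf{T}_2,\sigma\mathbf{S}_1)=RF(\sigma\mathbf{T}_1,\sigma\mathbf{S}_1)=RF(\mathbf{T}_1,\mathbf{S}_1)$, the last step by the relabelling invariance of $RF$; hence the distance is the same for every consistent full differentiation.

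The part that needs the most care is the bookkeeping in the two middle paragraphs: pinning down that a ``mutually consistent full differentiation'' is exactly the block-respecting bijection picture above, checking that the leaf-permutations of the stars really do extend to tree automorphisms, and disposing of boundary cases such as multiplicity-one labels, $\mathcal{L}(S)\supsetneq M$, and a near-trivial $S$. A more computational alternative avoids the isomorphism step altogether: one shows that $|\Sigma(\mathbf{T})|$ and $|\Sigma(\mathbf{S})|$ do not depend on the differentiation, and that an internal edge of $T$ contributes a split common to $\mathbf{T}$ and $\mathbf{S}$ precisely when, for every multi-copy label $a$, it keeps all the $a$-labelled leaves of $T$ on a single side and the resulting (canonical) split lies in $\Sigma(\mathbf{S})$ --- conditions phrased purely in terms of $(T,\varphi,S)$ --- so that $RF(\mathbf{T},\mathbf{S})$ is already determined by $(T,\varphi,S)$. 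I would include whichever version the reader finds more transparent.
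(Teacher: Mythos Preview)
Your proposal is correct, and your primary argument is genuinely different from the paper's. The paper argues directly at the level of splits: for each multi-copy label $a$, every internal edge of the extended supertree keeps all $a$-leaves on one side (they form a star), so a split of $\mathbf{T}$ that separates copies of $a$ can never lie in $\Sigma(\mathbf{S})$ regardless of how the copies are named, while a split that keeps every block together is already a split of $M$ and matches or not independently of the differentiation. This is exactly the ``computational alternative'' you sketch in your final paragraph. Your main route instead exploits symmetry: a block-preserving permutation $\sigma$ carries $(\mathbf{T}_1,\mathbf{S}_1)$ to $(\mathbf{T}_2,\sigma\mathbf{S}_1)$, and the star structure of $\mathcal{S}$ supplies a label-preserving automorphism identifying $\sigma\mathbf{S}_1$ with $\mathbf{S}_2$. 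The paper's approach has the advantage of telling you \emph{which} edges of $T$ contribute matched splits (useful downstream for the algorithm), while yours is cleaner and would transfer unchanged to any target mul-tree whose copy-blocks sit in positions that are pairwise exchangeable by tree automorphisms.
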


In short, the RF distance between an input mul-tree and a supertree can be computed by 1) extending the supertree, 2) producing one consistent full differentiation of the two mul-trees, and 3) applying the split based formula to compute the RF distance.

\section{Solving the MulRF Problem}
\label{sec:solution}
Our local search heuristic for the MulRF problem starts with an initial supertree and explores the space of possible supertrees in search of a \emph{locally optimum} supertree; i.e., a tree whose score is minimum within its ``neighborhood''. The neighborhood is defined in terms of the \emph{Subtree Prune and Regraft} (\emph{SPR}) operation \cite{Allen:2001:STO}.  An SPR operation on an unrooted, binary tree $T$ cuts any edge, thereby pruning a subtree $t$, and then regrafts $t$ by the same cut edge to a new vertex obtained by subdividing a pre-existing edge in $T-t$ (Fig.~\ref{spr}).
The set of all trees obtained by the application of a single SPR operation on $T$ is called the \emph{SPR neighborhood} of $T$, and is denoted by $SPR_T$.  The size of this neighborhood is $\Theta(n^2)$.

\begin{figure}
\vspace*{-0.23in}
 \centering
 \includegraphics[width=2.9in]{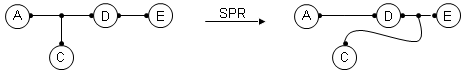}
 \vspace*{-0.15in}
 \caption{A schematic representation of the SPR operation.}
 \label{spr}
 \vspace*{-0.3in}
 \end{figure}

\begin{problem} [SPR Search] \\
\textit{Input:} A profile $\mathcal{P} = (\mathcal{T}_1, \mathcal{T}_2,...,\mathcal{T}_k)$ of unrooted mul-trees and a binary supertree $S$ for $\mathcal{P}$.\\
\textit{Output:} A tree $S\text{*} \in SPR_S$ such that $RF(\mathcal{P},S\text{*}) = \min_{S' \in SPR_S} RF(\mathcal{P},S')$. \end{problem}

In Section~\ref{sec:SPRSearch}, we present  an algorithm for the SPR search problem that runs in time $\Theta(n^2k)$.  The algorithm relies on results from \cite{Ruchi:2012:URF}, which characterize the RF distance between unrooted trees in terms of least common ancestors in rooted versions of those trees.
These properties enable us to update the RF distance quickly after an SPR operation has been applied to one of the trees.
For completeness, we briefly review these results in the next subsection.  For a full discussion with proofs, see \cite{Ruchi:2012:URF}.

\subsection{Robinson-Foulds Distance and Least Common Ancestors}
\label{structural}

A \emph{rooted phylogenetic tree}  $\mathbb{T}$ has exactly one distinguished vertex  $rt(\mathbb{T})$, called the \emph{root}. The root is a degree-two vertex if the tree is binary.
A vertex $v$  of $\mathbb{T}$ is \emph{internal} if $v \in  V(\mathbb{T})\backslash(\mathcal{L}(\mathbb{T}) \cup rt(\mathbb{T}))$.  The set of all internal vertices of $\mathbb{T}$ is denoted by $I(\mathbb{T})$. We define $\preceq_{\mathbb{T}}$ to be the partial order on $V(\mathbb{T})$ where $x \preceq_{\mathbb{T}} y$ if $y$ is a vertex on the path from $rt(\mathbb{T})$ to $x$. If $\{x, y\} \in E(\mathbb{T})$ and $x \preceq_{\mathbb{T}} y$, then $y$ is the \emph{parent} of $x$ and $x$ is a \emph{child} of $y$.
The \emph{least common ancestor (LCA)} of a non-empty subset $L \subseteq V(\mathbb{T})$, denoted by $\LCA_{\mathbb{T}}(L)$, is the unique smallest upper bound of $L$ under $\preceq_{\mathbb{T}}$.

Let $\mathbb{T}_v$ denote the subtree of $\mathbb{T}$ rooted at vertex $v \in V(\mathbb{T})$. For each node $v \in I(\mathbb{T})$, $C_{\mathbb{T}}(v)$ is defined to be the set of all leaf nodes in $\mathbb{T}_v$. Set $C_{\mathbb{T}}(v)$ is called a \emph{cluster}. Let  $\mathcal{H}(\mathbb{T})$ denote the set of all clusters of $\mathbb{T}$. The RF distance between rooted trees $\mathbb{T}$, $\mathbb{S}$ over the same leaf set
is defined as  \cite{Robinson:1981:CPT}$$RF(\mathbb{T},\mathbb{S}) := |(\mathcal{H}(\mathbb{T}) \backslash \mathcal{H}(\mathbb{S})) \cup (\mathcal{H}(\mathbb{S}) \backslash \mathcal{H}(\mathbb{T}))|.$$

Let $\mathbb{T}$ and $\mathbb{S}$ be the trees that result from rooting $\mathbf{T}$ and $\mathbf{S}$ at the branches incident on some arbitrarily-chosen but fixed taxon $r \in \mathcal{L}(\mathbf{T}) \cap \mathcal{L}(\mathbf{S})$  (Fig. \ref{root}).

\begin{figure}
\vspace*{-0.25in}
   \centering
   \includegraphics[width=3in,angle=0]{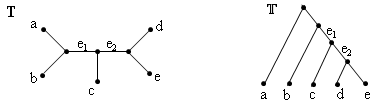}
\vspace*{-0.2in}
\caption{Tree $\mathbf{T}$ with leaf set $\{a, b, c, d, e\}$. The rooted tree $\mathbb{T}$ with $r=a$ is also shown.}
\label{root}
\vspace*{-0.32in}
\end{figure}

\begin{lemma}[\cite{Ruchi:2012:URF}] \label{lm:rooted-RF}
Let $\mathbf{T}$ and $\mathbf{S}$ be two unrooted phylogenetic trees with $\mathcal{L}(\mathbf{T}) = \mathcal{L}(\mathbf{S})$, then $RF(\mathbf{T},\mathbf{S}) = RF(\mathbb{T},\mathbb{S}).$ \end{lemma}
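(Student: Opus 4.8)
The plan is to reduce the claim to the classical dictionary between the splits of an unrooted phylogenetic tree and the clusters of any of its rootings, and then simply to count symmetric differences on both sides. Recall that $\mathbb{T}$ is obtained from $\mathbf{T}$ by subdividing the pendant edge at $r$ with a new vertex that becomes $rt(\mathbb{T})$; thus $rt(\mathbb{T})$ has exactly two children, namely $r$ and the former neighbour $x$ of $r$, and $I(\mathbb{T})=I(\mathbf{T})$ (no leaf is added). The same construction applied to $\mathbf{S}$, using the same $r$, produces $\mathbb{S}$.

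First I would establish a bijection between the internal edges of $\mathbf{T}$ and the internal vertices of $\mathbb{T}$ other than $x$: orient each internal edge $e=\{u,v\}$ of $\mathbf{T}$ away from $r$ and map it to its lower endpoint $v$; conversely, every internal vertex $v\neq x$ has a parent edge whose two endpoints are internal in $\mathbf{T}$. Under this correspondence the split that $e$ induces in $\mathbf{T}$ is exactly $C_{\mathbb{T}}(v)\mid(\mathcal{L}(\mathbf{T})\setminus C_{\mathbb{T}}(v))$, and $r$ lies in the second part because $e$ separates $r$ from $v$. Since distinct internal vertices of a phylogenetic tree have distinct clusters (a vertex whose cluster coincided with that of a proper ancestor would be forced to have a single child, hence be the degree-two root, which is not in $I(\mathbb{T})$), and since the only cluster left over, $C_{\mathbb{T}}(x)=\mathcal{L}(\mathbf{T})\setminus\{r\}$, has $n-1$ elements whereas every split part has at most $n-2$, I obtain the disjoint description
\[
\mathcal{H}(\mathbb{T})=\{\,A:\ A\mid B\in\Sigma(\mathbf{T}),\ r\in B\,\}\ \sqcup\ \{\,\mathcal{L}(\mathbf{T})\setminus\{r\}\,\},
\]
and analogously for $\mathbb{S}$.

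Next I would note that ``keep the part not containing $r$'' is literally the same injection $\sigma\mapsto A_\sigma$ from splits to clusters for both trees, and that a split is recovered from either of its parts; hence for every split $\sigma$ of $\mathbf{T}$, the cluster $A_\sigma$ belongs to $\mathcal{H}(\mathbb{S})$ if and only if $\sigma\in\Sigma(\mathbf{S})$ (the alternative $A_\sigma=\mathcal{L}(\mathbf{T})\setminus\{r\}$ is ruled out by the size bound). Together with the displayed equation this yields $|\mathcal{H}(\mathbb{T})|=1+|\Sigma(\mathbf{T})|$, $|\mathcal{H}(\mathbb{S})|=1+|\Sigma(\mathbf{S})|$, and $|\mathcal{H}(\mathbb{T})\cap\mathcal{H}(\mathbb{S})|=1+|\Sigma(\mathbf{T})\cap\Sigma(\mathbf{S})|$. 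Substituting into $|X\triangle Y|=|X|+|Y|-2|X\cap Y|$, the three ``$+1$''s cancel, so $RF(\mathbb{T},\mathbb{S})=|\mathcal{H}(\mathbb{T})\triangle\mathcal{H}(\mathbb{S})|=|\Sigma(\mathbf{T})\triangle\Sigma(\mathbf{S})|=RF(\mathbf{T},\mathbf{S})$.

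The step needing the most care is the bookkeeping at the root: one must check that $\mathcal{L}(\mathbf{T})\setminus\{r\}$ is the unique member of $\mathcal{H}(\mathbb{T})$ not arising from an internal edge, that it occurs identically in $\mathcal{H}(\mathbb{S})$ (so it never contributes to the symmetric difference), and that it never equals a split-induced cluster — all of which make the cancellation exact regardless of whether $\mathbf{T}$ and $\mathbf{S}$ are binary. Everything else is a routine verification that the split/cluster dictionary respects set equality; if the convention of \cite{Ruchi:2012:URF} instead excludes the cluster at the root's child, the ``$+1$''s disappear and the correspondence becomes an exact bijection, with the same conclusion.
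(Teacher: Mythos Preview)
The paper does not actually supply a proof of this lemma; it is quoted verbatim from \cite{Ruchi:2012:URF} and used as a black box, so there is no ``paper's own proof'' to compare against. Your argument therefore fills a gap that the present paper leaves to the cited reference.

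On its own merits, your approach is sound and is the standard way to see the statement: set up the bijection between internal edges of $\mathbf{T}$ and internal vertices of $\mathbb{T}$ other than the root's non-leaf child $x$, identify the one leftover cluster $C_{\mathbb{T}}(x)=\mathcal{L}(\mathbf{T})\setminus\{r\}$ common to both rooted trees, and cancel it in the symmetric difference. The size bound $|A_\sigma|\le n-2$ cleanly separates split-induced clusters from the special cluster, and the injection $\sigma\mapsto A_\sigma$ does the rest. One small slip: in your parenthetical about distinctness of clusters, it is the \emph{ancestor} (not the descendant) that would be forced to have a single child when two clusters coincide; the conclusion---that such a vertex would have degree two and hence be the root, which lies outside $I(\mathbb{T})$---is unchanged. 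With that wording corrected, the proof is complete and works for non-binary trees as you note.
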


We extend RF distance to the case where $\mathcal{L}(\mathbb{T}) \subseteq \mathcal{L}(\mathbb{S})$ in the same way as for unrooted trees. That is, $RF(\mathbb{T}, \mathbb{S}) := RF(\mathbb{T}, \mathbb{S}_{|\mathcal{L}(\mathbb{T})})$, where $\mathbb{S}_{|\mathcal{L}(\mathbb{T})}$ is the rooted phylogenetic tree obtained from $\mathbb{S}[\mathcal{L}(\mathbb{T})]$ by suppressing all non-root degree-two vertices.

We now show how to compute the RF distance in this general setting, without explicitly building $\mathbb{S}_{|\mathcal{L}(\mathbb{T})}$.  We need two concepts.  Let $v \in I(\mathbb{S})$. The \emph{restriction} of $C_{\mathbb{S}}(v)$ to $\mathcal{L}(\mathbb{T})$ is $\hat{C}_{\mathbb{T}}(v) := \{w \in \mathcal{L}(\mathbb{S}_v): w \in \mathcal{L}(\mathbb{T})\}.$
The \emph{vertex function} $f_{\mathbb{S}}$ assigns each $u \in I(\mathbb{T})$ the value $f_{\mathbb{S}}(u) = |U|$, where $U := \{v \in I(\mathbb{S}): C_{\mathbb{T}}(u) = \hat{C}_{\mathbb{T}}(v)\}$.  Observe that if $\mathcal{L}(\mathbb{S}) = \mathcal{L}(\mathbb{T})$, then for all $u \in I(\mathbb{T})$, $f_{\mathbb{S}}(u) \leq 1$.




\begin{lemma}[\cite{Ruchi:2012:URF}] \label{RFComp} $RF(\mathbb{T},\mathbb{S}) = |\mathcal{L}(\mathbb{T})|-|I(\mathbb{T})|+2|\mathcal{F}_{\mathbb{S}}|-2$, where $\mathcal{F_{\mathbb{S}}} := \{u \in I(\mathbb{T}): f_{\mathbb{S}}(u) = 0 \}$. \end{lemma}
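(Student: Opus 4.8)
The plan is to compute $RF(\mathbb{T},\mathbb{S})$ from the cluster characterization of the rooted RF distance, reducing the statement to three cluster counts. Put $\mathbb{S}' := \mathbb{S}_{|\mathcal{L}(\mathbb{T})}$, so that by definition $RF(\mathbb{T},\mathbb{S}) = RF(\mathbb{T},\mathbb{S}')$ and the two trees share the leaf set $\mathcal{L}(\mathbb{T})$. Expanding the symmetric difference, $RF(\mathbb{T},\mathbb{S}') = |\mathcal{H}(\mathbb{T})| + |\mathcal{H}(\mathbb{S}')| - 2\,|\mathcal{H}(\mathbb{T}) \cap \mathcal{H}(\mathbb{S}')|$, and I would evaluate the three terms in turn.

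For the first two: in a rooted phylogenetic tree every internal vertex has at least two children, so $v \mapsto C(v)$ is injective on internal vertices (incomparable vertices have disjoint clusters; an ancestor's cluster strictly contains a descendant's, since a second child of the ancestor supplies a leaf outside). Hence $|\mathcal{H}(\mathbb{T})| = |I(\mathbb{T})|$ and $|\mathcal{H}(\mathbb{S}')| = |I(\mathbb{S}')|$. Since the supertree $\mathbb{S}$ is binary, so is $\mathbb{S}'$ (restriction deletes subtrees and then suppresses the degree-two vertices this creates, so no vertex ever gains degree), and a rooted binary phylogenetic tree on $m$ leaves has exactly $m-2$ internal vertices under the convention used here that $I$ excludes the root; therefore $|\mathcal{H}(\mathbb{S}')| = |\mathcal{L}(\mathbb{T})|-2$.

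The crux is the third term, and I claim $|\mathcal{H}(\mathbb{T}) \cap \mathcal{H}(\mathbb{S}')| = |I(\mathbb{T})| - |\mathcal{F}_{\mathbb{S}}|$. The lemma behind this is that $\mathcal{H}(\mathbb{S}')$ is exactly the family of restricted clusters $\hat{C}_{\mathbb{T}}(v)$, $v \in I(\mathbb{S})$, with $2 \le |\hat{C}_{\mathbb{T}}(v)| \le |\mathcal{L}(\mathbb{T})|-1$. Inclusion ``$\subseteq$'' is immediate: each internal vertex of $\mathbb{S}'$ is some $v \in I(\mathbb{S})$ that survives suppression, with $\mathbb{S}'$-cluster $\hat{C}_{\mathbb{T}}(v)$. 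For ``$\supseteq$'', fix $v$ in the stated range; the vertices of $\mathbb{S}$ whose restricted cluster equals $\hat{C}_{\mathbb{T}}(v)$ form a $\preceq_{\mathbb{S}}$-chain, and its minimal element $v^{*}$ must have at least two children with non-empty restricted cluster --- a single such child would be either a leaf (contradicting $|\hat{C}_{\mathbb{T}}(v)| \ge 2$) or an internal vertex carrying the same restricted cluster (contradicting minimality) --- so $v^{*}$ keeps degree at least three in $\mathbb{S}[\mathcal{L}(\mathbb{T})]$, survives suppression, and realizes $\hat{C}_{\mathbb{T}}(v)$ as a cluster of $\mathbb{S}'$. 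Now every $u \in I(\mathbb{T})$ already has $2 \le |C_{\mathbb{T}}(u)| \le |\mathcal{L}(\mathbb{T})|-1$ (at least two children below; and since $u \ne rt(\mathbb{T})$, a child of $rt(\mathbb{T})$ whose subtree misses $u$ supplies a leaf outside $C_{\mathbb{T}}(u)$), so $C_{\mathbb{T}}(u) \in \mathcal{H}(\mathbb{S}')$ iff $C_{\mathbb{T}}(u) = \hat{C}_{\mathbb{T}}(v)$ for some $v \in I(\mathbb{S})$, iff $f_{\mathbb{S}}(u) \ge 1$, iff $u \notin \mathcal{F}_{\mathbb{S}}$. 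Combined with the injectivity of $u \mapsto C_{\mathbb{T}}(u)$, this proves the claim.

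Substituting, $RF(\mathbb{T},\mathbb{S}) = |I(\mathbb{T})| + (|\mathcal{L}(\mathbb{T})|-2) - 2(|I(\mathbb{T})| - |\mathcal{F}_{\mathbb{S}}|) = |\mathcal{L}(\mathbb{T})| - |I(\mathbb{T})| + 2|\mathcal{F}_{\mathbb{S}}| - 2$, as claimed. I expect the only real obstacle to be the bookkeeping in the third term: one must argue cleanly that suppressing degree-two vertices collapses each $\preceq_{\mathbb{S}}$-chain of equal-restricted-cluster vertices to exactly one vertex of $\mathbb{S}'$, inventing no new cluster and destroying none. The LCA machinery recalled above keeps this routine --- deciding $f_{\mathbb{S}}(u) \ge 1$ is the same as testing whether $\LCA_{\mathbb{S}}(C_{\mathbb{T}}(u))$ restricts exactly to $C_{\mathbb{T}}(u)$ --- so $\mathbb{S}'$ is never built explicitly, which is also what the fast update procedure exploits.
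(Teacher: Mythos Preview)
The paper does not supply its own proof of this lemma: it is quoted verbatim from \cite{Ruchi:2012:URF} and used as a black box, so there is nothing here to compare your argument against line by line.

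That said, your argument is correct and is the natural one. You expand the symmetric-difference definition of $RF(\mathbb{T},\mathbb{S}')$ into $|\mathcal{H}(\mathbb{T})|+|\mathcal{H}(\mathbb{S}')|-2|\mathcal{H}(\mathbb{T})\cap\mathcal{H}(\mathbb{S}')|$, identify $|\mathcal{H}(\mathbb{T})|=|I(\mathbb{T})|$, use binarity of $\mathbb{S}$ (hence of $\mathbb{S}'$) to get $|\mathcal{H}(\mathbb{S}')|=|\mathcal{L}(\mathbb{T})|-2$, and then show the intersection has size $|I(\mathbb{T})|-|\mathcal{F}_{\mathbb{S}}|$ by arguing that the clusters of $\mathbb{S}'$ are exactly the proper restricted clusters $\hat{C}_{\mathbb{T}}(v)$ of size at least two. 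The chain-minimality step that picks out a surviving representative $v^{*}$ for each such restricted cluster is the only nontrivial point, and you handle it cleanly. One remark worth making explicit: your count $|\mathcal{H}(\mathbb{S}')|=|\mathcal{L}(\mathbb{T})|-2$ genuinely needs $\mathbb{S}'$ binary, so the lemma as stated is tied to a binary supertree; this is indeed the setting of \cite{Ruchi:2012:URF} and of the SPR search here, but it is an implicit hypothesis rather than part of the displayed statement.
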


\begin{figure}
\vspace*{-0.35in}
   \centering
   \includegraphics[width=2.5in]{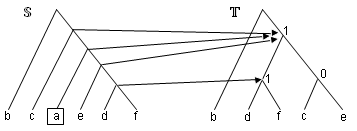}
\vspace*{-0.2in}
\caption{The LCA mapping from $\mathbb{S}$ to $\mathbb{T}$. Vertex $a$ in $\mathbb{S}$ is mapped to \emph{null} as $a \notin \mathcal{L}(\mathbb{T})$. The  internal vertices of $\mathbb{T}$ are labeled with the values of the vertex function.}
\label{map}
\vspace*{-0.2in}
\end{figure}

We now describe a $O(n)$-time algorithm to compute the initial vertex function for $\mathbb{S}$ relative to $\mathbb{T}$, along with the RF distance between these two trees.  The algorithm relies on LCAs.
For $\mathbb{S}$ and $\mathbb{T}$, the \emph{LCA mapping} $\mathcal{M}_{\mathbb{S},\mathbb{T}}: V(\mathbb{S}) \rightarrow V(\mathbb{T})$ is defined as $$\mathcal{M}_{\mathbb{S},\mathbb{T}}(u) :=
\begin{cases}
\text{LCA}_{\mathbb{T}}(\hat{C}_{\mathbb{T}}(u)), & \text{if } \hat{C}_{\mathbb{T}}(u) \neq \phi \text{ ;}\\
 \text{null}, & \text{otherwise.}
\end{cases}$$
See Fig. \ref{map}.

\begin{lemma}[\cite{Ruchi:2012:URF}] \label{LCAlemma} For all $u \in I(\mathbb{T})$, $f_{\mathbb{S}}(u) = |B|$, where $B := \{v \in I(\mathbb{S}): \mathcal{M}_{\mathbb{S},\mathbb{T}}(v) = u$ and $|C_{\mathbb{T}}(u)| = |\hat{C}_{\mathbb{T}}(v)|\}$.
\end{lemma}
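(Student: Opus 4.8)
\textbf{Proof proposal for Lemma~\ref{LCAlemma}.}

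The plan is to show that the two sets being counted, $U := \{v \in I(\mathbb{S}): C_{\mathbb{T}}(u) = \hat{C}_{\mathbb{T}}(v)\}$ (whose size is $f_{\mathbb{S}}(u)$ by definition of the vertex function) and $B := \{v \in I(\mathbb{S}): \mathcal{M}_{\mathbb{S},\mathbb{T}}(v) = u \text{ and } |C_{\mathbb{T}}(u)| = |\hat{C}_{\mathbb{T}}(v)|\}$, are in fact equal. I would prove this by establishing the two inclusions separately, fixing an arbitrary $u \in I(\mathbb{T})$ throughout.

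First I would argue $U \subseteq B$. Take $v \in U$, so $C_{\mathbb{T}}(u) = \hat{C}_{\mathbb{T}}(v)$. Since $u \in I(\mathbb{T})$, the cluster $C_{\mathbb{T}}(u)$ is non-empty, hence $\hat{C}_{\mathbb{T}}(v) \neq \phi$ and $\mathcal{M}_{\mathbb{S},\mathbb{T}}(v) = \mathrm{LCA}_{\mathbb{T}}(\hat{C}_{\mathbb{T}}(v)) = \mathrm{LCA}_{\mathbb{T}}(C_{\mathbb{T}}(u))$. The key subclaim here is that $\mathrm{LCA}_{\mathbb{T}}(C_{\mathbb{T}}(u)) = u$: this holds because $C_{\mathbb{T}}(u)$ is exactly the leaf set of the subtree $\mathbb{T}_u$ rooted at $u$, so $u$ is an upper bound of $C_{\mathbb{T}}(u)$ under $\preceq_{\mathbb{T}}$, and it is the smallest one since any proper descendant of $u$ fails to lie above at least one leaf in the other child's subtree (using that $u$ is internal, hence has at least two children in the rooted binary tree). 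Thus $\mathcal{M}_{\mathbb{S},\mathbb{T}}(v) = u$; and the cardinality condition $|C_{\mathbb{T}}(u)| = |\hat{C}_{\mathbb{T}}(v)|$ is immediate from the set equality. So $v \in B$.

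Next, the reverse inclusion $B \subseteq U$, which is the more delicate direction and the step I expect to be the main obstacle. Take $v \in B$, so $\mathcal{M}_{\mathbb{S},\mathbb{T}}(v) = u$ and $|C_{\mathbb{T}}(u)| = |\hat{C}_{\mathbb{T}}(v)|$. From $\mathcal{M}_{\mathbb{S},\mathbb{T}}(v) = u$ we get $\mathrm{LCA}_{\mathbb{T}}(\hat{C}_{\mathbb{T}}(v)) = u$, which forces $\hat{C}_{\mathbb{T}}(v) \subseteq C_{\mathbb{T}}(u)$ (every element of a set lies at or below the set's LCA). Combined with the equal-cardinality hypothesis and the finiteness of these sets, $\hat{C}_{\mathbb{T}}(v) \subseteq C_{\mathbb{T}}(u)$ and $|\hat{C}_{\mathbb{T}}(v)| = |C_{\mathbb{T}}(u)|$ yield $\hat{C}_{\mathbb{T}}(v) = C_{\mathbb{T}}(u)$, i.e., $v \in U$. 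The one point needing care is ensuring $\hat{C}_{\mathbb{T}}(v)$ is non-empty in the first place so that the LCA is defined and the ``subset below the LCA'' reasoning applies --- but this is guaranteed because $\mathcal{M}_{\mathbb{S},\mathbb{T}}(v) = u \neq \mathrm{null}$, which by definition of the LCA mapping means precisely $\hat{C}_{\mathbb{T}}(v) \neq \phi$.

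Having shown $U = B$, we conclude $f_{\mathbb{S}}(u) = |U| = |B|$, which is the claimed identity. I would remark that the cardinality condition in the definition of $B$ is essential and cannot be dropped: without it, $B$ would collect every $v \in I(\mathbb{S})$ whose restricted cluster has LCA equal to $u$, including vertices $v$ for which $\hat{C}_{\mathbb{T}}(v)$ is a proper subset of $C_{\mathbb{T}}(u)$ that happens to span $u$ --- these do not contribute to $f_{\mathbb{S}}(u)$ and would over-count. This is exactly the phenomenon illustrated in Figure~\ref{map}, and noting it makes the role of each hypothesis transparent.
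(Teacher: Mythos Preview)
Your argument is correct: showing $U=B$ by the two inclusions you describe, using $\mathrm{LCA}_{\mathbb{T}}(C_{\mathbb{T}}(u))=u$ for the forward direction and the subset-plus-equal-cardinality trick for the reverse, is exactly the right idea and goes through cleanly. One small slip: you write ``at least two children in the rooted binary tree,'' but $\mathbb{T}$ need not be binary here; the point you actually need (and which holds) is that any $u\in I(\mathbb{T})$ has at least two children, which follows from the degree-$\geq 3$ convention for internal vertices of phylogenetic trees.

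As for comparison with the paper: note that this lemma is not proved in the present paper at all --- it is quoted from \cite{Ruchi:2012:URF} (the authors' earlier work on unrooted RF supertrees), so there is no in-paper proof to compare against. Your self-contained argument is therefore a welcome addition rather than a duplication.
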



The LCA computation for $\mathbb{T}$ can be done in $O(n)$ time, and the LCA mapping from $\mathbb{S}$ to $\mathbb{T}$ can be done in $O(n)$ time \cite{Bender:2000:TLP} in bottom-up manner. Further, from Lemmas \ref{RFComp} and \ref{LCAlemma} we can compute the RF distance between $\mathbb{S}$ and $\mathbb{T}$ in $O(n)$ time as well.  

\subsection{Solving the SPR Search Problem}

\label{sec:SPRSearch}

Let $\mathcal{T} = (T,M,\varphi)$ be an arbitrary mul-tree in  $\mathcal{P}$.  We now show how to compute the RF distance from $\mathcal{T}$ to each tree in the SPR$_S$ neighborhood in linear time of the size of the neighborhood. Let $\mathcal{S}$ be the supertree $S$ after extending for $\mathcal{T}$. Let $\mathbf{T}$ and $\mathbf{S}$ be any two  mutually consistent full differentiations  of $\mathcal{T}$ and $\mathcal{S}$, respectively.
By Theorem~\ref{thm:diffr}, computing the RF distance between an input mul-tree $\mathcal{T}$ and all trees in the SPR neighborhood of an extended supertree $\mathcal{S}$ reduces to finding the RF distance between $\mathbf{T}$ and each tree in the SPR neighborhood of $\mathbf{S}$.


Suppose an SPR operation on $S$ cuts the edge $e = \{x,y\}$, and that $X$, $Y$ are the subtrees of $S-e$ containing $x$, $y$, respectively. Suppose subtree $Y$ is pruned and regrafted by the same cut edge to a new vertex obtained by subdividing an edge in $X$. The degree-two vertex $x$ is suppressed and the new vertex is denoted by $x$. Observe that there are $O(n)$ possible edges in $X$ to regraft $Y$. We perform regrafts in an order that leads to a constant time RF distance computation for each successive regraft. \\

\noindent \textbf{Observation 1.}  \emph{For $Z \in \{X,Y\}$, if $M \cap \mathcal{L}(Z) = \emptyset$, then RF$(S',\mathcal{T}) =$ RF$(S,\mathcal{T})$ for each $S'$ obtained from $S$ by regrafting $Y$ on any edge in $X$.} \\


We begin by regrafting $Y$ at an edge incident to a leaf in $X$. Let $\overline{S}$ and $\overline{\mathbf{S}}$ denote, respectively,  the tree that results from performing the prune-and-regraft and the full differentiation of this result tree. 
We compute the RF distance between $\mathbf{T}$ and $\overline{\mathbf{S}}$ using the algorithm described in the previous section. This method works by computing the RF distance between the rooted trees $\mathbb{T}$ and $\overline{\mathbb{S}}$ obtained by rooting $\mathbf{T}$ and $\overline{\mathbf{S}}$ at any leaf labeled by an element of $M \cap \mathcal{L}(X)$. (Note that,  by Observation 1, if $M \cap \mathcal{L}(X) = \emptyset$, then $\mathcal{T}$'s distance from $\overline{S}$ is same as $S$.)  The algorithm also computes the LCAs for $\mathbb{T}$ and the LCA mapping from $\overline{\mathbb{S}}$ to $\mathbb{T}$.

We perform the remaining regrafts of $Y$ on edges in $X$ by iterating through the vertices of $X$, starting from a leaf and exploring as far as possible along each branch before backtracking. The $k^{th}$ regraft is performed on the edge between the $k^{th}$ and $k+1^{st}$ vertices in this iteration. Let us denote this ordering of edges by $\aleph$. See Fig.~\ref{iteration}.
Observe that each two distinct consecutive edges in $\aleph$ are adjacent. We will show that, after the initial RF distance computation for $\overline{S}$, we can compute in constant time the RF distance for the result of regrafting on each successive (adjacent) edges in $\aleph$.


\begin{figure}
\vspace*{-0.22in}
  \begin{minipage}{.5\linewidth}
    \centering
    \includegraphics[width=1.2in]{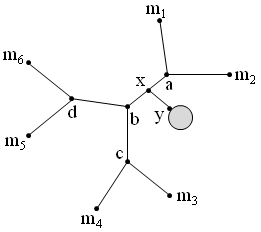}
  \end{minipage}
  \begin{minipage}{.45\linewidth}
    \captionof{figure}{A tree with a subtree regrafted at edge $\{a,b\}$. One iteration of vertices in the tree is $m_1, a, m_2,$ $a, b, c,m_3, c, m_4,c, b, d,m_5,$ $d, m_6,d, b, a,m_1$. The resulting ordering $\aleph$ is $\{m_1,a\},$ $\{a,m_2\},...,\{a,m_1\}$.}
    \label{iteration}
  \end{minipage}
  \vspace*{-0.22in}
\end{figure}

Beginning with $\overline{S}$, each $S' \in SPR_S$ helps in computing the RF distance of the next tree in the above regraft order. Assume that $S' \in SPR_S$ results from regrafting $Y$ at edge $\{a,b\}$ in $X$ as shown in Fig.~\ref{iteration}. Let the rooted tree obtained after extending and differentiating $S'$ be denoted by $\mathbb{S}'$. The LCA mapping and RF distance have been computed for $\mathbb{S}'$. Let $S'' \in SPR_S$ denote the tree obtained by regrafting $Y$ on edge $\{b,c\}$ in $X$ and the rooted counterpart of $S''$ is $\mathbb{S}''$.

Next, we find the vertices of $\mathbb{S}''$ whose LCA mapping  $\mathcal{M}_{\mathbb{S}'',\mathbb{T}}$ has changed as a result of the SPR operation. Based on the topology of $\mathbb{S}'$, there are three cases:
\begin{enumerate}
  \item \emph{$x$ is parent of $b$ and $b$ is parent of $c$.}  For all $t \in I(\mathbb{S}'') \backslash \{x,b\}$, $\mathcal{M}_{\mathbb{S}'',\mathbb{T}}(t)$ = $\mathcal{M}_{\mathbb{S}',\mathbb{T}}(t)$. Further, $\mathcal{M}_{\mathbb{S}'',\mathbb{T}}(b) := \mathcal{M}_{\mathbb{S}',\mathbb{T}}(x)$, and $\mathcal{M}_{\mathbb{S}'',\mathbb{T}}(x) := \LCA(\mathcal{M}_{\mathbb{S}',\mathbb{T}}(c),$ $\mathcal{M}_{\mathbb{S}',\mathbb{T}}(y))$.
  \item \emph{$b$ is parent of $c$ and $x$.} For all $t \in I(\mathbb{S}'') \backslash \{x\}$, $\mathcal{M}_{\mathbb{S}'',\mathbb{T}}(t)$ = $\mathcal{M}_{\mathbb{S}',\mathbb{T}}(t)$. Further, $\mathcal{M}_{\mathbb{S}'',\mathbb{T}}(x) := \LCA(\mathcal{M}_{\mathbb{S}',\mathbb{T}}(c),\mathcal{M}_{\mathbb{S}',\mathbb{T}}(y))$.
  \item \emph{$b$ is parent of $x$ and $c$ is parent of $b$.} For all $t \in I(\mathbb{S}'') \backslash \{b,x\}$, $\mathcal{M}_{\mathbb{S}'',\mathbb{T}}(t)$ = $\mathcal{M}_{\mathbb{S}',\mathbb{T}}(t)$. Moreover, $\mathcal{M}_{\mathbb{S}'',\mathbb{T}}(x) := \mathcal{M}_{\mathbb{S}',\mathbb{T}}(b)$, and $\mathcal{M}_{\mathbb{S}'',\mathbb{T}}(b) := \LCA(\mathcal{M}_{\mathbb{S}',\mathbb{T}}(d),$ $\mathcal{M}_{\mathbb{S}',\mathbb{T}}(a))$.
\end{enumerate}

Since we can check in constant time which one of the above three cases holds, the LCA mappings can be updated in constant time too. Let $H$ be a set $\{u \in I(\mathbb{T}) : f_{\mathbb{S}''}(u) \neq f_{\mathbb{S}'}(u)\}$. Set $H$ can be computed in constant time. Observe that $H$ has at most four vertices. Let $G$ denotes the set $\{w \in H: f_{\mathbb{S}'}(w) = 0, \text{ but } f_{\mathbb{S}''}(w) \geq 1\}$, and $L$ denote the set $\{w \in H: f_{\mathbb{S}'}(w) \geq 1, \text{ but } f_{\mathbb{S}''}(w) = 0\}$.

\begin{lemma}  \label{lm:NNI-RF} $RF(\mathbb{S}'',\mathbb{T}) = RF(\mathbb{S}',\mathbb{T}) - 2|G| + 2|L|$. \end{lemma}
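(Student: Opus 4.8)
The plan is to read the identity off directly from the closed-form expression in Lemma~\ref{RFComp}. Since the SPR operation is performed on the supertree $S$ (and only afterwards do we extend for $\mathcal{T}$ and take a consistent full differentiation), the tree $\mathbb{T}$ — hence $\mathcal{L}(\mathbb{T})$ and $I(\mathbb{T})$ — is the same object when we compute $RF(\mathbb{S}',\mathbb{T})$ and when we compute $RF(\mathbb{S}'',\mathbb{T})$; and $M\subseteq\mathcal{L}(S)$ guarantees $\mathcal{L}(\mathbb{T})\subseteq\mathcal{L}(\mathbb{S}')$ and $\mathcal{L}(\mathbb{T})\subseteq\mathcal{L}(\mathbb{S}'')$, so Lemma~\ref{RFComp} (together with the symmetry of $RF$) applies in both cases. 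Writing down the two instances of the formula and subtracting, the $|\mathcal{L}(\mathbb{T})|$, $|I(\mathbb{T})|$, and $-2$ terms cancel and we are left with
$$RF(\mathbb{S}'',\mathbb{T}) - RF(\mathbb{S}',\mathbb{T}) = 2\bigl(|\mathcal{F}_{\mathbb{S}''}| - |\mathcal{F}_{\mathbb{S}'}|\bigr),$$
where $\mathcal{F}_{\mathbb{S}'}=\{u\in I(\mathbb{T}): f_{\mathbb{S}'}(u)=0\}$ and $\mathcal{F}_{\mathbb{S}''}$ is defined analogously. It therefore suffices to prove $|\mathcal{F}_{\mathbb{S}''}| - |\mathcal{F}_{\mathbb{S}'}| = |L| - |G|$, and this is a pure bookkeeping step that does not depend on how the sets $H$, $G$, $L$ are computed (the preceding three-case analysis is only needed for the \emph{efficiency} of that computation, not for correctness).

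For the bookkeeping step I would use the defining property of $H$: $f_{\mathbb{S}'}$ and $f_{\mathbb{S}''}$ agree on every $u\in I(\mathbb{T})\setminus H$, so a vertex outside $H$ lies in $\mathcal{F}_{\mathbb{S}'}$ if and only if it lies in $\mathcal{F}_{\mathbb{S}''}$; hence the symmetric difference $\mathcal{F}_{\mathbb{S}'}\,\triangle\,\mathcal{F}_{\mathbb{S}''}$ is contained in $H$. A vertex $w\in H$ lies in $\mathcal{F}_{\mathbb{S}''}\setminus\mathcal{F}_{\mathbb{S}'}$ exactly when $f_{\mathbb{S}'}(w)\ge 1$ and $f_{\mathbb{S}''}(w)=0$, which is precisely the definition of $L$; and it lies in $\mathcal{F}_{\mathbb{S}'}\setminus\mathcal{F}_{\mathbb{S}''}$ exactly when $f_{\mathbb{S}'}(w)=0$ and $f_{\mathbb{S}''}(w)\ge 1$, which is precisely the definition of $G$. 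Since $G$ and $L$ are disjoint, $G\subseteq\mathcal{F}_{\mathbb{S}'}$, and $L\cap\mathcal{F}_{\mathbb{S}'}=\emptyset$, we have $\mathcal{F}_{\mathbb{S}''} = (\mathcal{F}_{\mathbb{S}'}\setminus G)\cup L$, so $|\mathcal{F}_{\mathbb{S}''}| = |\mathcal{F}_{\mathbb{S}'}| - |G| + |L|$. Substituting into the displayed difference gives $RF(\mathbb{S}'',\mathbb{T}) = RF(\mathbb{S}',\mathbb{T}) - 2|G| + 2|L|$, as claimed.

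The algebra and the set accounting are routine; the only point that I expect to require genuine (if minor) care is verifying that Lemma~\ref{RFComp} really does apply to $\mathbb{S}'$ and $\mathbb{S}''$ against one and the same $\mathbb{T}$ — i.e., that performing the prune-and-regraft on $S$, then extending for $\mathcal{T}$ and taking a consistent full differentiation, leaves $\mathbf{T}$ (and hence $\mathbb{T}$, $\mathcal{L}(\mathbb{T})$, $I(\mathbb{T})$) untouched and preserves $\mathcal{L}(\mathbb{T})\subseteq\mathcal{L}(\mathbb{S}'')$, so that the constant terms cancel cleanly. Given the setup established earlier in this section — both $\mathbb{S}'$ and $\mathbb{S}''$ are obtained by rooting at the same taxon $r\in M\cap\mathcal{L}(X)$, which the regraft of $Y$ onto an edge of $X$ does not disturb — this is immediate, but it should be made explicit before invoking the formula.
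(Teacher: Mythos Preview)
Your proof is correct and follows essentially the same approach as the paper: apply Lemma~\ref{RFComp} to both $\mathbb{S}'$ and $\mathbb{S}''$ against the fixed $\mathbb{T}$, then account for the change $|\mathcal{F}_{\mathbb{S}''}|-|\mathcal{F}_{\mathbb{S}'}|$ in terms of $G$ and $L$ via the defining property of $H$. Your version is somewhat more explicit about the set bookkeeping (spelling out $\mathcal{F}_{\mathbb{S}''}=(\mathcal{F}_{\mathbb{S}'}\setminus G)\cup L$) and about why the constant terms cancel, but the argument is the same.
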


Thus, after the initial regraft of $Y$ at a leaf in $X$, we can compute in constant time the RF-distance between $\mathbf{T}$ and the supertree that results from each subsequent regraft.


\begin{lemma} \label{lm:comp} For each $\{x,y\} \in E(S)$, where $X$ and $Y$ are two resulting subtrees containing $x$ and $y$, respectively. The RF distance for the set of trees obtained by regrafting $X$ (resp. $Y$) on each edge in $Y$ (resp. $X$) can be computed in $\Theta(n)$ time. \end{lemma}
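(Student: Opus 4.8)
The plan is to assemble the running-time bound from the incremental machinery developed earlier in this subsection. Fix an edge $\{x,y\} \in E(S)$ and let $X$, $Y$ be the two subtrees of $S - \{x,y\}$. First I would dispose of the trivial case: if $M \cap \mathcal{L}(X) = \emptyset$ or $M \cap \mathcal{L}(Y) = \emptyset$, then Observation~1 tells us every regraft of $Y$ onto an edge of $X$ leaves the RF distance to $\mathcal{T}$ unchanged, so all $O(n)$ values equal $RF(S,\mathcal{T})$ and are reported in $\Theta(n)$ time. So assume both sides contain a label of $M$. Pick a taxon $r$ labeled by an element of $M \cap \mathcal{L}(X)$; this taxon stays in $X$ under every regraft, so it can serve as the common root for all the rooted trees $\mathbb{T}$ and $\mathbb{S}'$ we form, which is exactly what lets the LCA-based formulas of Lemmas~\ref{lm:rooted-RF}--\ref{LCAlemma} apply uniformly across the whole neighborhood.

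Next I would account for the initialization. We build $\mathbf{T}$ and the extended, fully differentiated supertree, root both at $r$, compute all LCAs in $\mathbb{T}$, compute the LCA mapping from the first regrafted tree $\overline{\mathbb{S}}$ (regrafting $Y$ at an edge incident to a leaf of $X$) to $\mathbb{T}$, the vertex function $f_{\overline{\mathbb{S}}}$, the set $\mathcal{F}_{\overline{\mathbb{S}}}$, and hence $RF(\mathbf{T},\overline{\mathbf{S}})$ via Lemma~\ref{RFComp}. Under the paper's standing assumption that $|\mathcal{L}(T)| = \Theta(n)$, all of this is $O(n)$, using the linear-time LCA preprocessing of \cite{Bender:2000:TLP} as noted at the end of Section~\ref{structural}. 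Then I would invoke the iteration: order the $O(n)$ candidate regraft edges of $X$ as $\aleph$, so that consecutive edges are adjacent. Passing from $S'$ (regraft at $\{a,b\}$) to $S''$ (regraft at the next, adjacent edge $\{b,c\}$) falls into one of the three topological cases enumerated above; in each case only $O(1)$ LCA-mapping values change, they can be recomputed in constant time, the set $H$ of vertices of $\mathbb{T}$ whose $f$-value changes has $|H| \le 4$ and is identified in constant time, and Lemma~\ref{lm:NNI-RF} then gives $RF(\mathbb{S}'',\mathbb{T})$ from $RF(\mathbb{S}',\mathbb{T})$ by the correction $-2|G| + 2|L|$ in constant time. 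Summing: one $\Theta(n)$ initialization plus $O(n)$ updates of $O(1)$ each yields $\Theta(n)$ total for all regrafts of $Y$ onto edges of $X$. By symmetry the same bound holds for regrafting $X$ onto edges of $Y$, and combining $\mathbb{T} = \mathbb{S}_{|\mathcal{L}(\mathbb{T})}$-style restriction is handled implicitly because Lemma~\ref{RFComp} never builds the restriction explicitly.

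The main obstacle, and the point needing the most care, is verifying that the preconditions of Lemmas~\ref{lm:NNI-RF} and \ref{LCAlemma} genuinely hold at \emph{every} step of the iteration — in particular that rooting at the fixed taxon $r \in M \cap \mathcal{L}(X)$ remains legitimate throughout (which is why $r$ must be chosen in $X$, the side that is never pruned), and that when we move along $\aleph$ across a vertex of $X$ we always land in exactly one of the three cases, with $x$, $b$, $c$ (or $d$) playing the roles the case analysis assigns them. Once the bijection between "steps in $\aleph$" and "instances of the three-case update" is pinned down, the constant-time-per-step claim is immediate from the already-established Lemmas~\ref{lm:NNI-RF} and the constant bound $|H|\le 4$, and the $\Theta(n)$ total follows by a one-line summation. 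I would also remark that the lower bound in $\Theta(n)$ is trivial: there are $\Theta(n)$ regraft positions, so even writing down the output costs $\Omega(n)$.
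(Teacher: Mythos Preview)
Your proposal is correct and follows essentially the same approach as the paper's proof: an $\Theta(n)$ initialization for the first regraft $\overline{S}$ (via the LCA-based machinery of Section~\ref{structural}), followed by constant-time updates for each subsequent regraft along the ordering $\aleph$ using the three-case analysis and Lemma~\ref{lm:NNI-RF}, with the symmetric argument for the other direction. Your write-up is considerably more detailed---explicitly handling the trivial case via Observation~1, justifying the choice of root $r$ in $M \cap \mathcal{L}(X)$, and noting the $\Omega(n)$ lower bound---but the underlying argument is the same as the paper's terse version.
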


\begin{theorem} \label{tm:main} The SPR Search problem can be solved in $\Theta(n^2k)$ time. \end{theorem}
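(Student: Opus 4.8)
The plan is to reduce the SPR Search problem to, for each input mul-tree $\mathcal{T}_i \in \mathcal{P}$ separately, computing $RF(\mathcal{T}_i, S')$ for every $S' \in SPR_S$, and then to aggregate. Concretely, I would keep an accumulator array $A[\cdot]$ indexed by the $\Theta(n^2)$ members of $SPR_S$ (each identified by the cut edge together with the regraft edge), initialized to $0$. For $i = 1, \dots, k$: extend $S$ for $\mathcal{T}_i$ to get $\mathcal{S}$, fix any pair of mutually consistent full differentiations $\mathbf{T}$ of $\mathcal{T}_i$ and $\mathbf{S}$ of $\mathcal{S}$ (legitimate by Theorem~\ref{thm:diffr}, since every such pair yields the same RF distance, so we may track only topology changes of $\mathbf{S}$ while $\mathbf{T}$ stays fixed), compute $RF(\mathcal{T}_i, S')$ for all $S' \in SPR_S$, and add each value to the corresponding entry of $A$. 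After all $k$ trees are processed, $A[S'] = \sum_{\mathcal{T} \in \mathcal{P}} RF(\mathcal{T}, S') = RF(\mathcal{P}, S')$, so a single scan over the $\Theta(n^2)$ entries of $A$ returns the optimal $S^*$. Correctness is immediate from the definitions of $RF(\mathcal{P},\cdot)$ and of the problem.

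The running time is dominated by the per-mul-tree cost of producing all distances $RF(\mathcal{T}_i, S')$. Fix $\mathcal{T} = \mathcal{T}_i$. For a cut edge $\{x,y\}$ with sides $X$, $Y$, if $M \cap \mathcal{L}(X) = \emptyset$ or $M \cap \mathcal{L}(Y) = \emptyset$ then Observation~1 gives all relevant distances for free (they equal $RF(\mathcal{T},S)$); so assume both sides meet $M$. By Lemma~\ref{lm:comp}, regrafting the pruned subtree onto the edges of the other side in the traversal order $\aleph$ costs $\Theta(n)$ in total for this cut edge: the first regraft (at a leaf edge) costs $O(n)$ via the LCA-based algorithm of Section~\ref{structural} — build the LCA structure on the rooted tree $\mathbb{T}$, the LCA mapping from $\overline{\mathbb{S}}$ to $\mathbb{T}$, the vertex function $f$, and the set $\mathcal{F}_{\mathbb{S}}$, then apply Lemma~\ref{RFComp} — and each subsequent regraft, being on an edge adjacent in $\aleph$ to the previous one, costs $O(1)$ via the three-case LCA-mapping update followed by Lemma~\ref{lm:NNI-RF}, together with $O(1)$-time maintenance of the affected entries of $f$ and of $|\mathcal{F}_{\mathbb{S}}|$. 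Summing $\Theta(n)$ over the $\Theta(n)$ cut edges of $S$ gives $\Theta(n^2)$ per input mul-tree, hence $\Theta(n^2 k)$ over the profile; the final scan of $A$ contributes only $\Theta(n^2)$. For the lower bound on the algorithm's running time, the heuristic examines each of the $\Theta(n^2)$ trees of $SPR_S$ against each of the $k$ input trees and thus performs $\Omega(n^2 k)$ work, so the bound is $\Theta(n^2 k)$.

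The crux — and the step needing the most care — is justifying the $\Theta(n)$ bound of Lemma~\ref{lm:comp}, i.e.\ that after one $O(n)$-time initialization per cut edge every subsequent regraft along $\aleph$ is a genuine $O(1)$ update. This rests on several points that must be checked jointly: (i) consecutive edges in $\aleph$ are adjacent, so passing from $S'$ to $S''$ changes the topology only locally, and the state stored for $S'$ (its LCA mapping, $f_{\mathbb{S}'}$, $|\mathcal{F}_{\mathbb{S}'}|$, and the current RF value) suffices to derive that for $S''$, even when $\aleph$ traverses an edge for the second time; (ii) exactly one of Cases~1--3 applies and is identified in $O(1)$ time, after which only $O(1)$ LCA-mapping values change and each is a single $\LCA$ query answerable in $O(1)$ after $O(n)$ preprocessing of $\mathbb{T}$; and (iii) the set $H$ of vertices of $\mathbb{T}$ whose $f$-value changes has size at most four, so $f$, the sets $G$ and $L$, and hence $|\mathcal{F}_{\mathbb{S}}|$ and $RF(\mathbb{S}'',\mathbb{T})$ via Lemma~\ref{lm:NNI-RF} are all updated in $O(1)$ time. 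One also needs the bookkeeping fact that, for a fixed cut edge, all regrafts of the pruned subtree onto one side keep that side's leaf set unchanged, so $\mathbb{T}$ may be rooted at a fixed leaf labeled by an element of $M \cap \mathcal{L}(X)$ (resp.\ $M \cap \mathcal{L}(Y)$) and its LCA structure computed once per cut edge rather than once per neighbor — this is what keeps the per-cut-edge cost at $\Theta(n)$ instead of $\Theta(n^2)$. With these in place, Lemmas~\ref{lm:comp} and~\ref{lm:NNI-RF} assemble directly into the stated $\Theta(n^2 k)$ bound.
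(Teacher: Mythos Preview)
Your proposal is correct and follows essentially the same approach as the paper: decompose $SPR_S$ by cut edge, invoke Lemma~\ref{lm:comp} to get $\Theta(n)$ per cut edge per input tree, then multiply by $\Theta(n)$ cut edges and $k$ input trees. Your write-up is more detailed than the paper's---you spell out the accumulator mechanism, the role of Observation~1, and the internals behind the $O(1)$-per-regraft update that make Lemma~\ref{lm:comp} work---but the logical skeleton is identical.
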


\section{Experimental Evaluation}
\label{sec:experiment}
\subsection{Method}
\textbf{\emph{Simulated data sets.}} We generated model species trees using the uniform speciation (Yule) module in the program Mesquite \cite{Maddison:Mesquite:2009}. Two sets of model trees were generated: i) 50 taxa trees of height 220 thousand years (tyrs), ii) 100 taxa trees of height 440 tyrs (note that the dates are relative; they do not have to represent thousands of years). Each data set had 20 model species trees. We evolved 150 and 300 gene trees for each 50- and 100-taxon model species tree, respectively. 
We used Arvestad et al.'s \cite{ArvestadBLS03} duplication-loss model  to evolve gene trees within the model tree. 
We applied LGT events on the evolved gene trees, using the standard subtree transfer model of LGT.  One LGT event causes the subtree rooted at a vertex $c$ to be pruned and regrafted at an edge $(a,b)$, where $a$ and $b$ together are not in the path from the root (of the tree) to $c$. We used gene duplication and loss (D/L) rate of 0.002 events/gene per tyrs and LGT rate of 2 events per gene tree. In other words, a gene tree can have 0 to 2 LGT events.

We evolved gene trees based on four evolutionary scenarios: i) no duplications, losses, or LGT (called \emph{none}), ii) D/L rate 0.002 and no LGT (called \emph{dl}), iii) no duplication or loss, and LGT rate 2 (called \emph{lgt}), and iv) D/L rate 0.002 and LGT rate 2 (called \emph{both}). The parameter values for each simulation are called the \emph{model condition}. We deleted 0 to 25\% of the taxa (selected at random) from each gene tree  to represent missing data, which is common in almost all phylogenomic studies.. For each gene tree, we used Seq-Gen \cite{Rambaut:1997:SGA} to simulate a DNA sequence alignment of length 500 based on the GTR+Gamma+I model. The parameters of the model were chosen with equal probability from the parameter sets estimated in \cite{Ganapathy_PhDThesis06} on three biological data sets \cite{SwensonBWL2010}. We estimated maximum likelihood trees from each simulated sequence alignment using RAxML \cite{StamatakisBioinf2006}, performing searches from 5 different starting trees and saving the best tree. We rooted each estimated gene tree at the midpoint of the longest leaf-to-leaf path before the species tree construction. \\

\noindent\textbf{\emph{Species tree estimation.}} We estimated species trees via GTP minimizing only the number of duplications (Only-dup) \cite{Wehe:bioinfo:2008}, GTP minimizing duplications and losses (Dup-loss) \cite{Bansal:APBC:2010}, GTP minimizing  LGT events (SPR supertree or SPRS for short) \cite{Whidden:SPR:2012}, and the MulRF heuristic. Both Only-dup and Dup-loss were executed with their default settings, including a fast leaf-adding heuristic for initial species tree construction. SPRS was run with 25 iterations of the global rearrangement search option. For 50-taxon data sets, it calculated the exact rSPR distance if it was 15 or less, and otherwise it estimated the rSPR distance using the 3-approximation. For the 100-taxon data sets, we used the 3-approximation of the rSPR distance. SPRS does not allow mul-trees as input.  Therefore we only ran it on \emph{none} and \emph{lgt} data sets. Experiments were performed on the University of Florida High Performance Computing test nodes with 8 to 24 cores. \\ 

\begin{table}
\footnotesize
\vspace{-0.25in}
 \centering
\begin{tabular}{|c|c|c|c|c|c|}
\hline
Num. Taxa & Sets & Only-dup & Dup-loss & SPRS & MulRF \\
\hline\hline
\multirow{4}{*}{50} & \emph{none} & $<1$s & 2s & 8h 34m 32s & 3s    \\
                    & \emph{lgt}  & $<1$s & 2s & 8h 30m 30s & 2s       \\
                    & \emph{dl}   & $<1$s & 3s & NA & 6s        \\
                    & \emph{both} & $<1$s & 3s & NA & 6s     \\ \hline
\multirow{4}{*}{100} & \emph{none} & 9s & 37s & 21h 34m 25s & 58s     \\
                    & \emph{lgt}  & 11s & 49s & 19h 6m 9s & 51s        \\
                    & \emph{dl}   &  9s & 30s & NA & 1m 11s        \\
                    & \emph{both} & 11s & 37s & NA & 1m 15s     \\ \hline
\end{tabular}
\vspace{4pt}
\caption{Running time for species tree estimations}
\vspace{-0.35in}
\label{timeTable}
\end{table}

\noindent \textbf{\emph{Performance evaluation.}} We report the average topological error (ATE) for each model condition. This is the average of the normalized RF distance (dividing the RF distance by number of internal edges in both trees) between each of the 20 model species trees and their estimated species trees. An ATE of 0 indicates that two trees are identical, and an ATE of 100 indicates that two trees share no common splits. We also compared the number of gene duplications estimated by Only-dup and Dup-loss and losses estimated by Dup-loss with the actual number of these events in each gene tree simulation.

\subsection{Results}
Both Dup-loss and Only-dup overestimate duplications for sets \emph{dl} and \emph{both} in both 50- and 100-taxon model trees (Fig. \ref{Lossplot}(a,b)). They also imply many duplications in the \emph{none} and \emph{lgt} data sets, where the simulations included no duplications. Similarly, Dup-loss overestimates losses for sets \emph{dl} and \emph{both} and also erroneously estimates losses for sets \emph{none} and \emph{lgt} (Fig. \ref{Lossplot}(c,d)).

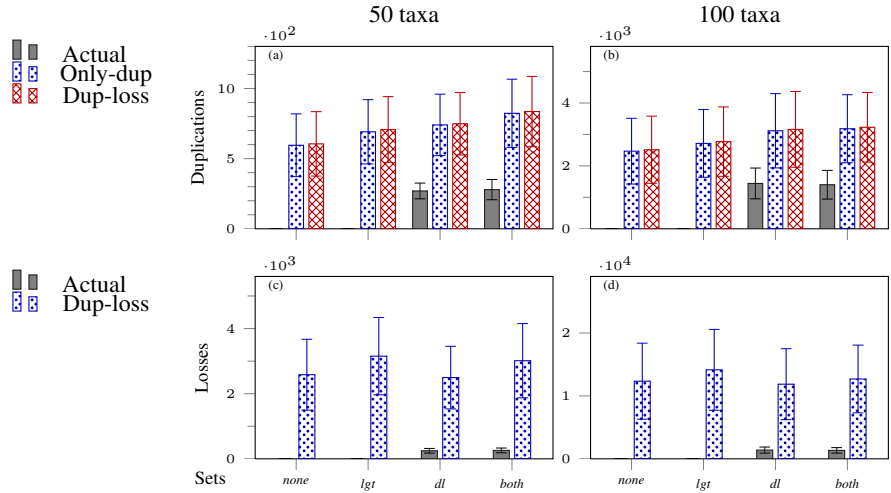
\begin{figure}
\vspace*{-0.2in}
\centering
\begin{tikzpicture}
    \matrix {
        \begin{axis}[
            width=5.5cm,height=4cm,xtick={1,2,3,4},xticklabels={,,,,},tick pos=left,ytick align=outside,xtick align=outside,minor y tick num=4,
            ylabel=Duplications,enlargelimits=0.19,scaled y ticks=base 10:-2, y label style={at={(-0.13,0.5)}}, x label style={at={(-0.15,-0.02)}},label style={font=\scriptsize},
            ybar,bar width=5.5pt,tick label style={font=\tiny},
            ymin=0.0,ymax=1300.0,title=50 taxa,enlarge y limits=false,
            extra description/.code={\node at (-0.8,0.93) {\ref{pgfplots:label1}}; \node at (-0.54,0.96) {Actual}; \node at (-0.8,0.81) {\ref{pgfplots:label2}}; \node at (-0.5,0.84) {Only-dup}; \node at (-0.8,0.69) {\ref{pgfplots:label3}}; \node at (-0.5,0.72) {Dup-loss}; \node at (0.07,0.948) {\tiny{(a)}};
            }
        ]
        \addplot[color=black!90!white,fill=black!50!white][error bars/.cd,y dir=both,y explicit,error mark options={rotate=90,black,mark size=2pt,line width=0.3pt}]
        table[x=rt,y=actual,y error=eractual] {Data/taxa-50D.txt}; \label{pgfplots:label1}
        \addplot[pattern=crosshatch dots,pattern color=blue!70!black,draw=blue!70!black,samples=700][error bars/.cd,y dir=both,y explicit,error mark options={rotate=90,blue!70!black,mark size=2pt,line width=0.3pt}]
        table[x=rt,y=od,y error=erod] {Data/taxa-50D.txt}; \label{pgfplots:label2}
        \addplot[pattern=crosshatch,pattern color=red!70!black,draw=red!70!black][error bars/.cd,y dir=both,y explicit,error mark options={rotate=90,red!70!black,mark size=2pt,line width=0.3pt}]
        table[x=rt,y=dl,y error=erdl] {Data/taxa-50D.txt};  \label{pgfplots:label3}
        \end{axis}
        &
        \begin{axis}[
            width=5.5cm,height=4cm,xtick={1,2,3,4},xticklabels={,,,,},tick pos=left,ytick align=outside,xtick align=outside,minor y tick num=1,
            enlargelimits=0.19,scaled y ticks=base 10:-3,
            ybar,bar width=5.5pt,tick label style={font=\tiny},
            ymin=0.0,ymax=5800.0,title=100 taxa,enlarge y limits=false,
            extra description/.code={\node at (0.07,0.948) {\tiny{(b)}};}
        ]
        \addplot[black!90!white,fill=black!50!white][error bars/.cd,y dir=both,y explicit,error mark options={rotate=90,black,mark size=2pt,line width=0.3pt}] table[x=rt,y=actual,y error=eractual] {Data/taxa-100D.txt};
        \addplot[pattern=crosshatch dots,pattern color=blue!70!black,draw=blue!70!black,samples=700][error bars/.cd,y dir=both,y explicit,error mark options={rotate=90,blue!70!black,mark size=2pt,line width=0.3pt}] table[x=rt,y=od,y error=erod] {Data/taxa-100D.txt};
        \addplot[pattern=crosshatch,pattern color=red!70!black,draw=red!70!black][error bars/.cd,y dir=both,y explicit,error mark options={rotate=90,red!70!black,mark size=2pt,line width=0.3pt}] table[x=rt,y=dl,y error=erdl] {Data/taxa-100D.txt};
        \end{axis}
        \\
        \begin{axis}[
            width=5.5cm,height=4cm,xtick={1,2,3,4},xticklabels={\emph{none},\emph{lgt},\emph{dl},\emph{both}},tick pos=left,ytick align=outside,xtick align=outside,minor y tick num=1,
            ylabel=Losses,xlabel={Sets},enlargelimits=0.19,scaled y ticks=base 10:-3, y label style={at={(-0.13,0.5)}}, x label style={at={(-0.15,-0.02)}},label style={font=\scriptsize},
            ybar,bar width=6pt,tick label style={font=\tiny},
            ymin=0.0,ymax=5600.0,enlarge y limits=false,
            extra description/.code={\node at (-0.8,0.93) {\ref{pgfplots:label8}}; \node at (-0.54,0.96) {Actual}; \node at (-0.8,0.81) {\ref{pgfplots:label9}}; \node at (-0.5,0.84) {Dup-loss};
            \node at (0.07,0.948) {\tiny{(c)}};
            }
        ]
        \addplot[black!90!white,fill=black!50!white][error bars/.cd,y dir=both,y explicit,error mark options={rotate=90,black,mark size=2pt,line width=0.3pt}]
        table[x=rt,y=actual,y error=eractual] {Data/taxa-50L.txt}; \label{pgfplots:label8}
        \addplot[pattern=crosshatch dots,pattern color=blue!70!black,draw=blue!70!black,samples=700][error bars/.cd,y dir=both,y explicit,error mark options={rotate=90,blue!70!black,mark size=2pt,line width=0.3pt}]
        table[x=rt,y=dl,y error=erdl] {Data/taxa-50L.txt}; \label{pgfplots:label9}
        \end{axis}
        &
        \begin{axis}[
            width=5.5cm,height=4cm,xtick={1,2,3,4},xticklabels={\emph{none},\emph{lgt},\emph{dl},\emph{both}},tick pos=left,ytick align=outside,xtick align=outside,
            enlargelimits=0.19,scaled y ticks=base 10:-4,
            ybar,bar width=6pt,tick label style={font=\tiny},
            ymin=0.0,ymax=29000.0,enlarge y limits=false,
            extra description/.code={\node at (0.07,0.948) {\tiny{(d)}};}
        ]
        \addplot[black!90!white,fill=black!50!white][error bars/.cd,y dir=both,y explicit,error mark options={rotate=90,black,mark size=2pt,line width=0.3pt}] table[x=rt,y=actual,y error=eractual] {Data/taxa-100L.txt};
        \addplot[pattern=crosshatch dots,pattern color=blue!70!black,draw=blue!70!black,samples=700][error bars/.cd,y dir=both,y explicit,error mark options={rotate=90,blue!70!black,mark size=2pt,line width=0.3pt}] table[x=rt,y=dl,y error=erdl] {Data/taxa-100L.txt};
        \end{axis}
        \\
        };
\end{tikzpicture}
\vspace*{-0.2in}
\caption{Graphs a-b shows duplications estimated by Only-dup and Dup-loss, and Graphs c-d losses estimated by Dup-loss, against the actual number of these events in gene trees, for all model conditions; means and standard errors are shown.}
\vspace*{-0.25in}
\label{Lossplot}
\end{figure}

For each set of 50- and 100-taxon model trees, the MulRF species trees are more accurate than those produced by the other three methods.  For example, the ATE rate of MulRF is 16.75\% to 39.91\% lower than the method of lowest ATE rate among other three methods (Fig. \ref{fig:ATEplot}).

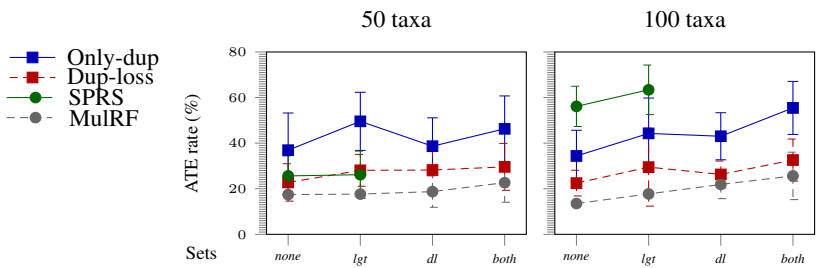
\begin{figure}
\vspace*{-0.3in}
\centering
\begin{tikzpicture}
    \matrix {
        \begin{axis}[
            width=5cm,height=4cm,xtick={1,2,3,4},xticklabels={\emph{none},\emph{lgt},\emph{dl},\emph{both}},tick pos=left,ytick align=outside,xtick align=outside,minor y tick num=19,
            ylabel={ATE rate (\%)},xlabel={Sets},x label style={at={(-0.25,-0.02)}}, y label style={at={(-0.22,0.5)}}, label style={font=\scriptsize},
            tick label style={font=\tiny},
            ymin=0,ymax=80,title=50 taxa,enlarge y limits=false,
            extra description/.code={\node at (-1.0,0.97) {\ref{pgfplots:label4}}; \node at (-0.59,0.96) {Only-dup}; \node at (-1.0,0.86) {\ref{pgfplots:label5}}; \node at (-0.6,0.855) {Dup-loss}; \node at (-1.0,0.75) {\ref{pgfplots:label6}}; \node at (-0.654,0.755) {SPRS}; \node at (-1.0,0.64) {\ref{pgfplots:label7}}; \node at (-0.626,0.64) {MulRF};
            }
        ]
        \addplot[color=blue!70!black,mark=square*][error bars/.cd,y dir=both,y explicit, error mark options={rotate=90,blue!70!black,mark size=2pt,line width=0.3pt}] table[x=rt,y=od,y error=erod] {Data/taxa-50.txt}; \label{pgfplots:label4}
        \addplot[color=red!70!black,densely dashed,mark=square*,mark options={fill=red!70!black,solid}][error bars/.cd,y dir=both,y explicit, error mark options={rotate=90,red!70!black,mark size=2pt,line width=0.3pt}] table[x=rt,y=dl,y error=erdl] {Data/taxa-50.txt}; \label{pgfplots:label5}
        \addplot[color=green!40!black,mark=*][error bars/.cd,y dir=both,y explicit, error mark options={rotate=90,densely dashed,green!40!black,mark size=2pt,line width=0.3pt}] table[x=rt,y=spr,y error=erspr] {Data/taxa-50.txt}; \label{pgfplots:label6}
        \addplot[color=white!40!black,densely dashed,mark=*,mark options={fill=white!40!black,solid}][error bars/.cd,y dir=both,y explicit, error mark options={rotate=90,densely dashed,white!40!black,mark size=2pt,line width=0.3pt}] table[x=rt,y=rf,y error=errf] {Data/taxa-50.txt}; \label{pgfplots:label7}
        \end{axis}
        &
        \begin{axis}[
            width=5cm,height=4cm,xtick={1,2,3,4},xticklabels={\emph{none},\emph{lgt},\emph{dl},\emph{both}},tick pos=left,ytick align=outside,xtick align=outside,minor y tick num=19,
            tick label style={font=\tiny},
            yticklabels={},
            ymin=0,ymax=80,title=100 taxa,enlarge y limits=false,
        ]
        \addplot[color=blue!70!black,mark=square*][error bars/.cd,y dir=both,y explicit, error mark options={rotate=90,blue!70!black,mark size=2pt,line width=0.3pt}] table[x=rt,y=od,y error=erod] {Data/taxa-100.txt};
        \addplot[color=red!70!black,densely dashed,mark=square*,mark options={fill=red!70!black,solid}][error bars/.cd,y dir=both,y explicit, error mark options={rotate=90,red!70!black,mark size=2pt,line width=0.3pt}] table[x=rt,y=dl,y error=erdl] {Data/taxa-100.txt};
        \addplot[color=green!40!black,mark=*][error bars/.cd,y dir=both,y explicit, error mark options={rotate=90,densely dashed,green!40!black,mark size=2pt,line width=0.3pt}] table[x=rt,y=spr,y error=erspr] {Data/taxa-100.txt};
        \addplot[color=white!40!black,densely dashed,mark=*,mark options={fill=white!40!black,solid}][error bars/.cd,y dir=both,y explicit, error mark options={rotate=90,densely dashed,white!40!black,mark size=2pt,line width=0.3pt}] table[x=rt,y=rf,y error=errf] {Data/taxa-100.txt};
        \end{axis}
        \\
        };
\end{tikzpicture}
\vspace*{-0.2in}
\caption{Average topological error (means with standard error bars) for species tree constructed by Only-dup, Dup-loss, SPRS, and MulRF method, for all model conditions.}
\label{fig:ATEplot}
\vspace*{-0.25in}
\end{figure}

In order to examine how Only-dup, Dup-loss, and SPRS methods perform when the process of gene tree evolution only includes events that these methods assume to be the source of discordance, we simulated gene trees that using a model that includes only duplication and loss, or LGT. While SPRS could not be tested on the former, Only-dup and Dup-loss had high ATE rate  (indicating low accuracy) on the latter.

\section{Conclusion}

We presented a new approach for inferring species tree from incongruent gene trees that is not based on potentially restrictive assumptions about the causes of the conflict among gene trees. This approach is appealing for real, genomic data sets, in which many processes such as deep coalescence, recombination, gene duplications and losses, and LGT, as well as phylogenetic error likely contribute to gene tree dischord. In simulation experiments, the MulRF method estimated species trees more accurately than other GTP methods, and it appears to be relatively robust to the effects of phylogenetic error, gene duplication and loss, and LGT.  In addition, the MulRF method is fast, estimating 100-taxon species trees from hundreds of gene trees in under two minutes. One reason for this strong performance may be the underlying unrooted metric. The advantage of an unrooted metric compared to a rooted one, like those used in the other supertree methods, has been well-studied in the context of RF supertrees for singly-labeled trees \cite{Ruchi:2012:URF}. Further tests are needed to characterize the performance of MulRF methods under different evolutionary scenarios. Another future direction will be to incorporate estimates of gene tree uncertainty into the supertree analysis by weighing the splits differently when computing the RF distance.

\bibliographystyle{abbrv}

\begin{thebibliography}{10}

\bibitem{Allen:2001:STO}
B.~L. Allen and M.~Steel.
\newblock Subtree transfer operations and their induced metrics on evolutionary
  trees.
\newblock {\em Annals of Combinatorics}, 5:1--15, 2001.

\bibitem{Ane07012007}
C.~An\'e, B.~Larget, D.~A. Baum, S.~D. Smith, and A.~Rokas.
\newblock {Bayesian} estimation of concordance among gene trees.
\newblock {\em Mol. Biol. Evol.}, 24(7):1575, 2007.

\bibitem{ArvestadBLS03}
L.~Arvestad, A.-C. Berglund, J.~Lagergren, and B.~Sennblad.
\newblock Bayesian gene/species tree reconciliation and orthology analysis
  using mcmc.
\newblock In {\em ISMB (Supplement of Bioinformatics)}, pages 7--15, 2003.

\bibitem{Avise:1983:MBE}
J.~Avise, J.~Shapira, S.~Daniel, C.~Aquadro, and R.~Lansman.
\newblock Mitochondrial {DNA} differentiation during the speciation process in
  peromyscus.
\newblock {\em Molecular Biology and Evolution}, 1:38--56, 1983.

\bibitem{Bansal:APBC:2010}
M.~S. Bansal, J.~G. Burleigh, and O.~Eulenstein.
\newblock Efficient genome-scale phylogenetic analysis under the
  duplication-loss and deep coalescence cost models.
\newblock {\em BMC Bioinformatics}, 11(Suppl 1):S42, 2010.

\bibitem{Mukul:2010:RFS}
M.~S. Bansal, J.~G. Burleigh, O.~Eulenstein, and D.~Fern{\'a}ndez-Baca.
\newblock {Robinson-Foulds} supertrees.
\newblock {\em Algorithms for Molecular Biology}, 5:18, 2010.

\bibitem{Bender:2000:TLP}
M.~A. Bender and M.~Farach-Colton.
\newblock The {LCA} problem revisited.
\newblock In G.~H. Gonnet, D.~Panario, and A.~Viola, editors, {\em LATIN},
  volume 1776 of {\em Lecture Notes in Computer Science}, pages 88--94.
  Springer, 2000.

\bibitem{Burleigh:SB:2011}
J.~G. Burleigh, M.~S. Bansal, O.~Eulenstein, S.~Hartmann, A.~Wehe, and T.~J.
  Vision.
\newblock Genome-scale phylogenetics: inferring the plant tree of life from
  18,896 discordant gene trees.
\newblock {\em Systematic Biology}, 60(2):117--125, 2011.

\bibitem{Ruchi:2012:URF}
R.~Chaudhary, J.~G. Burleigh, and D.~Fern{\'a}ndez-Baca.
\newblock Fast local search for unrooted robinson-foulds supertrees.
\newblock {\em {IEEE/ACM} Transactions on Computational Biology and
  Bioinformatics}, 9:1004--1013, 2012.

\bibitem{DegnanRosenberg2006}
J.~H. Degnan and N.~A. Rosenberg.
\newblock Discordance of species trees with their most likely gene trees.
\newblock {\em PLoS Genet}, 2(5):e68, 05 2006.

\bibitem{Doyle:1992:SB}
J.~Doyle.
\newblock Gene trees and species trees: Molecular systematics as one-character
  taxonomy.
\newblock {\em Systematic Botany}, 17:144--163, 1993.

\bibitem{Ganapathy_PhDThesis06}
G.~Ganapathy.
\newblock {\em Algorithms and Heuristics for Combinatorial Optimization in
  Phylogeny}.
\newblock PhD thesis, University of Texas at Austin, 2006.

\bibitem{Ganapathy:2006:PIB}
G.~Ganapathy, B.~Goodson, R.~Jansen, H.~Le, V.~Ramachandran, and T.~Warnow.
\newblock Pattern identification in biogeography.
\newblock {\em IEEE/ACM Transactions on Computational Biology and
  Bioinformatics}, 3:334--346, 2006.

\bibitem{Garey:1979:CIG}
M.~R. Garey and D.~S. Johnson.
\newblock {\em {Computers and Intractability: A guide to the theory of
  NP-completeness}}.
\newblock W. H. Freeman, New York, 1979.

\bibitem{Goodman:1979:FTG}
M.~Goodman, J.~Czelusniak, G.~W. Moore, A.~E. {Romero-Herrera}, and G.~Matsuda.
\newblock Fitting the gene lineage into its species lineage. a parsimony
  strategy illustrated by cladograms constructed from globin sequences.
\newblock {\em Systematic Zoology}, 28:132--163, 1979.

\bibitem{Hickey:2008:SPR}
G.~Hickey, F.~Dehne, A.~Rau-Chaplin, and C.~Blouin.
\newblock {SPR} distance computation for unrooted trees.
\newblock {\em Evolutionary Bioinformatics}, 4:17--27, 2008.

\bibitem{Huang:2009:WDA}
H.~Huang and L.~L. Knowles.
\newblock What is the danger of the anomaly zone for empirical phylogenetics?
\newblock {\em Systematic Biology}, 58:527--536, 2009.

\bibitem{Kubatko04012009}
L.~S. Kubatko, B.~C. Carstens, and L.~L. Knowles.
\newblock {STEM: species tree estimation using maximum likelihood for gene
  trees under coalescence}.
\newblock {\em Bioinformatics}, 25(7):971--973, 2009.

\bibitem{Liu06012007}
L.~Liu and D.~K. Pearl.
\newblock Species trees from gene trees: Reconstructing {Bayesian} posterior
  distributions of a species phylogeny using estimated gene tree distributions.
\newblock {\em Systematic Biology}, 56(3):504--514, 2007.

\bibitem{Maddison:1996}
W.~Maddison.
\newblock {\em {Molecular Zoology: Advances, Strategies and Protocols}},
  chapter Molecular approaches and the growth of phylogenetic biology, pages
  47--63.
\newblock Wiley-Liss,New York, 1996.

\bibitem{maddison97}
W.~P. Maddison.
\newblock Gene trees in species trees.
\newblock {\em Systematic Biology}, 46:523--536, 1997.

\bibitem{Maddison:Mesquite:2009}
W.~P. Maddison and D.~Maddison.
\newblock Mesquite: a modular system for evolutionary analysis. version 2.6.
  http://mesquiteproject.org, 2009.

\bibitem{McMorris:Steel:93}
F.~R. McMorris and M.~A. Steel.
\newblock The complexity of the median procedure for binary trees.
\newblock In {\em In Proceedings of the International Federation of
  Classification Societies}, 1993.

\bibitem{Page:1998:GCG}
R.~D.~M. Page.
\newblock {GeneTree}: comparing gene and species phylogenies using reconciled
  trees.
\newblock {\em Bioinformatics}, 14(9):819--820, 1998.

\bibitem{Pamilo:1988:MBE}
P.~Pamilo and M.~Nei.
\newblock Relationships between gene trees and species trees.
\newblock {\em Mol. Biol. Evol.}, 5:568--583, 1988.

\bibitem{Rambaut:1997:SGA}
A.~Rambaut and N.~C. Grassly.
\newblock {Seq-Gen}: An application for the {Monte-Carlo} simulation of {DNA}
  sequence evolution along phylogenetic trees.
\newblock {\em Copmput. Appl Biosci.}, 13:235--238, 1997.

\bibitem{Robinson:1981:CPT}
D.~F. Robinson and L.~R. Foulds.
\newblock Comparison of phylogenetic trees.
\newblock {\em Mathematical Biosciences}, 53:131--147, 1981.

\bibitem{Sanderson2007}
M.~J. Sanderson and M.~M. McMahon.
\newblock Inferring angiosperm phylogeny from {EST} data with widespread gene
  duplication.
\newblock {\em BMC Evolutionary Biology}, 7(suppl 1:S3), 2007.

\bibitem{Semple:2003:phy}
C.~Semple and M.~Steel.
\newblock {\em Phylogenetics}.
\newblock Oxford University Press, 2003.

\bibitem{StamatakisBioinf2006}
A.~Stamatakis.
\newblock {RAxML-VI-HPC}: Maximum likelihood- based phylogenetic analyses with
  thousands of taxa and mixed models.
\newblock {\em Bioinformatics}, 22:2688---2690, 2006.

\bibitem{SteelRodrigo08}
M.~Steel and A.~Rodrigo.
\newblock Maximum likelihood supertrees.
\newblock {\em Systematic Biology}, 57(2), April 2008.

\bibitem{SwensonBWL2010}
M.~S. Swenson, F.~Barban\c{c}on, T.~Warnow, and C.~R. Linder.
\newblock A simulation study comparing supertree and combined analysis methods
  using {SMIDGen}.
\newblock {\em Algorithms for Molecular Biology}, 5:8, 2010.

\bibitem{Swofford:1996:PI}
D.~L. Swofford, G.~J. Olsen, P.~J. Waddel, and D.~M. Hillis.
\newblock Phylogenetic inference.
\newblock In D.~M. Hillis, C.~Moritz, and B.~K. Mable, editors, {\em Molecular
  Systematics}, chapter~11, pages 407--509. Sinauer Assoc., Sunderland, Mass,
  1996.

\bibitem{Than:2011:MDC}
C.~Than and N.~Rosenberg.
\newblock Consistency properties of species tree inference by minimizing deep
  coalescences.
\newblock {\em Journal of Computational Biology}, 18:1--15, 2011.

\bibitem{Wehe:bioinfo:2008}
A.~Wehe, M.~S. Bansal, J.~G. Burleigh, and O.~Eulenstein.
\newblock Duptree: a program for large-scale phylogenetic analyses using gene
  tree parsimony.
\newblock {\em Bioinformatics}, 24(13), 2008.

\bibitem{Whidden:SPR:2012}
C.~Whidden, N.~Zeh, and R.~Beiko.
\newblock {SPRSupertrees}. version 1.1.0.
  http://kiwi.cs.dal.ca/software/sprsupertrees, 2012.

\bibitem{Yu:recomb:11}
Y.~Yu, T.~Warnow, and L.~Nakhleh.
\newblock Algorithms for {MDC}-based multi-locus phylogeny inference.
\newblock In {\em RECOMB}, pages 531--545, 2011.

\end{thebibliography}

\section*{Appendix}

\subsection*{Computing RF Distance between two mul-trees is NP-Complete}
The proof relies on a reduction from the following NP-complete \cite{Garey:1979:CIG} problem.

\begin{problem} [Exact Cover by 3-Sets (X3C)]  \\
\textit{Input:} $S := \{s_1,...,s_n\}$, where $n = 3q$, and $C := \{C_1,...,C_m\}$ such that $C_i = \{s_{i_1}, s_{i_2}, s_{i_3}\}$. \\
\textit{Output:} Are there exist sets $C_{i_1},...,C_{i_q}$ such that $\bigcup_{j=1}^{q} C_{i_j} = S$ ?  \end{problem}

Note that X3C remains NP-complete \cite{Hickey:2008:SPR} even when each element of $S$ occurs in \emph{exactly} three subsets in $C$, thus $m=n=3q$. We take this version of X3C for reduction.

Given an instance for the X3C problem, we construct two mul-trees $\mathcal{T}_1$ and $\mathcal{T}_2$ such that transforming from $\mathcal{T}_1$ into $\mathcal{T}_2$ (or vice versa) requires $\kappa$ (to be specified later) contractions and refinements if and only if an exact cover of $S$ exists.  The construction is as follows. For each $s_i \in S$, we construct two rooted binary trees $T$ and $T'$ that take a ``large'' number of contractions and refinements to transform into each other. Let $k$ and $t$ be two positive integers such that $k + 2 \geq n^2$ and $k +2 = 2^t$. Tree $T$ and $T'$ have $k + 2$ leaves. Tree $T'$ has the same topology as $T$, but for each cherry\footnote{Two leaves connected with the same internal vertex in a tree are called a \emph{cherry}.} $(x,y)$ in $T$, $x$ and $y$ are in different subtrees $T'_u$ and $T'_v$ in $T'$, where $u$ and $v$ are two children of $rt(T')$. For each $s_i \in S$, corresponding trees $T$ and $T'$ have unique leaves (see Fig.~\ref{trees}.)
\begin{figure}
\vspace*{-0.2in}
 \centering
 \includegraphics[width=2.8in]{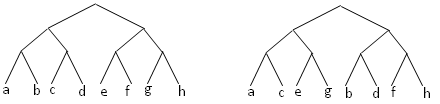}
 \vspace*{-0.15in}
 \caption{Two possible trees $T$ and $T'$ on 8 leaves with RF distance 12.}
 \label{trees}
\vspace*{-0.3in}
 \end{figure}

\begin{lemma} \label{twoT} $RF(T,T') = 2k$.\end{lemma}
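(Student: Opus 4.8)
The plan is to compute $RF(T,T')$ from the cluster-based formula for rooted trees, $RF(\mathbb{T},\mathbb{S}) = |(\mathcal{H}(\mathbb{T})\setminus\mathcal{H}(\mathbb{S}))\cup(\mathcal{H}(\mathbb{S})\setminus\mathcal{H}(\mathbb{T}))|$, and to reduce everything to the single claim that $T$ and $T'$ induce no common cluster. First I would record the routine bookkeeping: $T$ and $T'$ are rooted binary trees with $k+2$ leaves, so each has exactly $(k+2)-2 = k$ non-root internal vertices, and in a binary tree distinct internal vertices induce distinct clusters; hence $|\mathcal{H}(T)| = |\mathcal{H}(T')| = k$. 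Consequently
$$RF(T,T') \;=\; |\mathcal{H}(T)| + |\mathcal{H}(T')| - 2\,|\mathcal{H}(T)\cap\mathcal{H}(T')| \;=\; 2k - 2\,|\mathcal{H}(T)\cap\mathcal{H}(T')|,$$
so it suffices to prove $\mathcal{H}(T)\cap\mathcal{H}(T') = \emptyset$.

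To establish this, I would exploit the way $T'$ is built. Let $u,v$ be the two children of $rt(T')$ and set $L := C_{T'}(u)$ and $R := C_{T'}(v)$, so $\{L,R\}$ partitions the leaf set. Every cluster of $T'$ other than $L$ and $R$ comes from an internal vertex lying strictly inside the subtree $T'_u$ or strictly inside $T'_v$, and is therefore a subset of $L$ or of $R$; thus \emph{every} cluster of $T'$ is contained in $L$ or in $R$. On the other hand, the construction guarantees that for each cherry $\{x,y\}$ of $T$, one of $x,y$ lies in $L$ and the other in $R$. Finally, I would observe that every cluster of $T$ contains a cherry of $T$: for any non-root internal vertex $w$ of $T$ the subtree $T_w$ has at least two leaves, and the two children of a deepest internal vertex of $T_w$ form a cherry of $T$ contained in $C_T(w)$ (when $T_w$ is itself a single cherry, $w$ is that deepest vertex and $C_T(w)$ is the cherry). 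Combining the last two facts, every cluster of $T$ meets both $L$ and $R$, hence is a subset of neither, hence is not a cluster of $T'$; this gives $\mathcal{H}(T)\cap\mathcal{H}(T') = \emptyset$ and $RF(T,T') = 2k$.

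I do not anticipate a real obstacle: once the cluster formula is invoked, the argument rests entirely on the ``every cherry of $T$ is split by the root bipartition of $T'$'' property that is designed into $T'$, together with the elementary fact that each cluster of a binary tree contains at least one cherry. The only point needing a little care is the counting in the first step --- that the clusters of each of $T$ and $T'$ are pairwise distinct, so that $|\mathcal{H}(\cdot)| = k$ exactly --- but this is standard for binary trees and, in any case, the inequality $RF(T,T')\le 2k$ is automatic once one knows each tree contributes at most $k$ clusters.
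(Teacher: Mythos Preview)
Your proposal is correct and follows essentially the same route as the paper: count that each binary tree on $k+2$ leaves has exactly $k$ clusters, reduce to showing $\mathcal{H}(T)\cap\mathcal{H}(T')=\emptyset$, and then use that every cluster of $T$ contains a cherry of $T$ while every cherry of $T$ is split between the two subtrees below $rt(T')$. The paper's version is terser (it relies on the balanced shape of $T$ to say each cluster is a union of cherries), whereas you give the general ``deepest internal vertex'' argument, but the substance is identical.
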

\begin{proof} $RF(T,T') = 2|\mathcal{H}(T) \backslash \mathcal{H}(T')|$, since $T$ and $T'$ are binary trees. $T$ and $T'$ are binary trees on $k+2$ leaves, thus $\mathcal{H}(T) = \mathcal{H}(T') = k$. Thus it suffices to show that no cluster in $T$ matches any cluster in $T'$. Let $v \in I(T)$, the corresponding cluster $C(v)$ contains leaves of $1 \leq p \leq (k+2)/4$ cherries. From the construction, $T'$ has both leaves of each cherry in different subtrees under the root $rt(T')$; thus there is no matching cluster for $C(v)$ in $T'$. \qed \end{proof}

\begin{figure}[ht]
 \centering
 \subfigure[]{
 \includegraphics[width=2.5in]{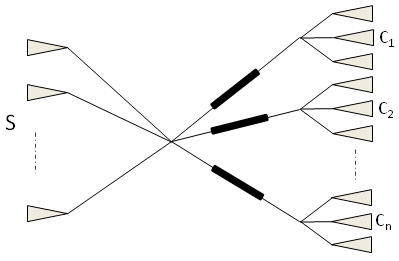}
 \label{fig:main}
 }
 \subfigure[]{
 \includegraphics[width=2in]{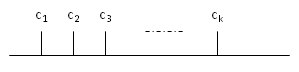}
 \label{fig:toll}
 }

 \label{fig:struct}
 \vspace*{-0.15in}
 \caption{(a) Structure of mul-tree $\mathcal{T}_1$ and (b) A toll sequence of $k$ leaves.}
 \vspace*{-0.25in}
 \end{figure}

 We are now ready for the construction of $\mathcal{T}_1$ and $\mathcal{T}_2$. 
Figure \ref{fig:main} outlines the structure of $\mathcal{T}_1$. The solid rectangles represent \emph{toll} sequences of $k$ uniquely labeled leaves (Fig. \ref{fig:toll}). The left side of $\mathcal{T}_1$ has $n$ triangles one for each of the $n$ elements in $S$. Each triangle represents a tree $T$ corresponding to $s_i \in S$, connecting through its root. The right side of $\mathcal{T}_1$ has $n$ sets of 3 triangles corresponding to the subsets in $C$; for each subset $C_i = \{s_{i_1}, s_{i_2}, s_{i_3}\}$, the triangles represent three trees $T'$s, corresponding to each $s_{i_j}$ (for $1 \leq j \leq 3$), connected through their roots. 

$\mathcal{T}_2$ has the similar structure except that $\mathcal{T}_2$ has tree $T'$ for each $s_i \in S$ and tree $T$ for each element of $C_i \in C$ (for $1 \leq i \leq n$). Thus, $\mathcal{T}_2$ has $T'$s on the left side and $T$s on the right side, which is opposite to what $\mathcal{T}_1$ has.

\begin{lemma} Mul-trees $\mathcal{T}_1$ and $\mathcal{T}_2$ can be constructed in polynomial time. \end{lemma}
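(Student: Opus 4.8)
The plan is to show that each ingredient of the construction --- the pair of trees $T,T'$ for each element of $S$, the toll sequences, and the two assembled mul-trees --- has size polynomial in $n$ and can be produced in time linear in that size, so that the whole construction runs in polynomial time. Since $m=n=3q$ in the version of X3C used here, it suffices to bound everything in terms of $n$.

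First I would pin down the parameters. Set $t := \lceil \log_2 n^2 \rceil$ and $k := 2^t - 2$; then $k+2 = 2^t \ge n^2$ and $k+2 < 2n^2$, so $k = \Theta(n^2)$, and $t \ge 4$ because $n = 3q \ge 3$, so $4 \mid (k+2)$ as needed by the proof of Lemma~\ref{twoT}. This computation is clearly polynomial in $n$. Next, for each $s_i \in S$ I would build the pair of rooted binary trees $T$ and $T'$ on a private set of $k+2$ leaves, with $T$ of the fixed shape of Fig.~\ref{trees} and $T'$ obtained from it by distributing the two leaves of each cherry of $T$ into the two subtrees hanging off $rt(T')$; each such tree has $2(k+2)-1 = O(n^2)$ vertices and is produced in $O(n^2)$ time, for a total of $O(n^3)$ over all $i$.

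Then I would assemble $\mathcal{T}_1$ following Fig.~\ref{fig:main}: create the toll sequences, which are caterpillars of $k = O(n^2)$ freshly (uniquely) labelled leaves (Fig.~\ref{fig:toll}); attach on the left the $n$ copies of $T$, one per $s_i$, through their roots; attach on the right, for each $C_i = \{s_{i_1},s_{i_2},s_{i_3}\}$, the three copies of $T'$ for $s_{i_1},s_{i_2},s_{i_3}$ through their roots; and connect these pieces as shown. There are $O(n)$ triangles and $O(n)$ toll sequences, each of size $O(n^2)$, so $\mathcal{T}_1$ has $O(n^3)$ vertices and is built in $O(n^3)$ time. The mul-tree $\mathcal{T}_2$ is built in exactly the same way, except that the copies of $T$ and $T'$ are swapped between the two sides; the same bound applies. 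Composing the three phases gives an $O(n^3)$-time construction, which settles the claim.

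There is no real obstacle here; the only point that warrants a moment's attention is confirming that a single choice of $(k,t)$ can meet all of the constraints imposed elsewhere --- $k+2 = 2^t$, $k+2 \ge n^2$, and (for Lemma~\ref{twoT}) $4 \mid (k+2)$ --- while keeping $k$ polynomially bounded, which the choice $t = \lceil \log_2 n^2 \rceil$ does.
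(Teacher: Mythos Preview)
Your proposal is correct and follows essentially the same approach as the paper's proof: count the building blocks ($T$, $T'$, toll sequences), bound each by a polynomial in $n$, and conclude. Your version is more explicit --- you actually pin down $k=\Theta(n^2)$, derive the $O(n^3)$ bound, and verify that the parameter choice $(k,t)$ is consistent with the constraints used elsewhere --- whereas the paper simply asserts that each component ``can be constructed in polynomial time'' without quantifying.
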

\begin{proof} Trees $T$ and $T'$ are rooted binary trees on $k + 2$ leaves. $T$ and $T'$ can be constructed in polynomial time, and so the $4n$ copies of each (for $\mathcal{T}_1$ and $\mathcal{T}_2$). Further, $2n$ toll sequences ($n$ for each $\mathcal{T}_1$ and $\mathcal{T}_2$) can be constructed in polynomial time. There are constant number of rest of the vertices in $\mathcal{T}_1$ and $\mathcal{T}_2$. Hence, the Lemma. \qed \end{proof}

Here is the connection between exactly covering $S$ and transforming $\mathcal{T}_1$ into $\mathcal{T}_2$ by contractions and refinements: To transform $\mathcal{T}_1$ into $\mathcal{T}_2$, all we need is to convert each tree $T$ on the left into $T'$ and each tree $T'$ on the right into $T$. From Lemma \ref{twoT}, this costs $24qk$ contractions and refinements. A rather clever technique is to swap $3q$ $T$s on the left with their counterparts on the right and to transform the remaining $6q$ $T'$s on the right into $T$s. If an exact cover $C_{i_1},...,C_{i_q}$ of $S$ exists, we can partition the $3q$ $T$s into $q$ groups according to the cover. For each $C_j$ ($j = i_1,...,i_q$) in the cover, we swap the corresponding group of trees for sequences $s_{j_1}, s_{j_2}, s_{j_3}$ with their counterparts.

\begin{lemma} \label{swapT} All $T'$s for each $C_j$ ($j = i_1,...,i_q$) can be swapped with corresponding $T$s by $2(k+1)$ contractions and refinements. \end{lemma}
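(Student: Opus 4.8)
Since the statement asserts only an upper bound, the plan is to exhibit one explicit sequence of $2(k+1)$ contractions and refinements which, applied to $\mathcal{T}_1$, relocates the three pendant $T$-gadgets for $s_{j_1},s_{j_2},s_{j_3}$ on the left into the block of $C_j$ on the right and the three $T'$-gadgets in that block onto the left, while leaving everything else fixed. Since $\mathcal{T}_2$ is exactly $\mathcal{T}_1$ with this relocation performed (for the elements covered by $C_j$), no gadget needs to be \emph{re-shaped} at all, only moved. The one structural fact I would lean on is that for each $s_{j_m}$ the $T$-gadget on the left and the $T'$-gadget in $C_j$'s block are built on the \emph{same} set of leaf labels; so each gadget may be treated as an opaque pendant subtree with a fixed leaf set, and moving it to a slot that already expects that leaf set is an edit local to the gadget and the single edge it hangs from. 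I would argue this directly in terms of contractions and refinements rather than splits, since for mul-trees these are not interchangeable (Section~\ref{sec:prelim}).

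For the sequence itself: I would use that the construction places the three left gadgets for $C_j$'s elements, and the block of $C_j$, so that each family hangs off a single vertex and these two vertices are joined along the backbone only through one toll sequence of $k$ uniquely labelled leaves and a constant number of connecting edges --- a path carrying $k+1$ internal edges. Contract all $k+1$ of these edges: this collapses the toll caterpillar and fuses the two attachment vertices into one high-degree vertex $w$, off which the three $T$-gadgets, the three $T'$-gadgets, the $k$ toll leaves and the two backbone stubs now all hang. Then refine $w$ back out by $k+1$ successive refinements, regrowing exactly the toll caterpillar with its $k$ leaves and re-separating the two regions, but attaching the three (unchanged, still $T$-shaped) gadgets at the slots of $C_j$'s block and the three (unchanged, still $T'$-shaped) gadgets at the left vertex. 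Because all six gadgets carry exactly the leaf sets expected at their new slots, this is a legal refinement sequence whose net effect is the desired swap. The tally is $k+1$ contractions plus $k+1$ refinements, i.e.\ $2(k+1)$ --- far fewer than the roughly $12k$ operations that converting the six gadgets in place would cost by Lemma~\ref{twoT}.

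Finally I would do the bookkeeping: verify that after this collapse-and-regrow pass the toll sequence and the backbone are restored verbatim, that every gadget other than the six is untouched (so no cluster interior to another gadget is ever involved), and that the six gadgets end up exactly where $\mathcal{T}_2$ puts them, labels included. The step I expect to be the main obstacle is making the ``regrow with permuted attachments'' argument airtight in the mul-tree setting: one must check that collapsing a length-$(k+1)$ path and regrowing it really suffices to transport whole gadgets across that path with no extra transport moves, that the $k+1$ refinements can be ordered so that no spurious cluster is created and then destroyed along the way (otherwise the count would exceed $2(k+1)$), and that the exact anatomy of the toll sequence together with its connecting edges indeed contributes precisely $k+1$ edges.
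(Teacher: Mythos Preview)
Your approach is exactly the paper's: contract the $k+1$ edges of the toll sequence attached to $C_j$ (the paper itemizes these as the $k-1$ internal edges plus the two edges at either end), then perform $k+1$ refinements that regrow the toll with the $T$-gadgets now inside the $C_j$ block and the $T'$-gadgets on the left. Your caveat about verifying the precise anatomy of the toll path and that the regrow-with-permuted-attachments step introduces no extra operations is well placed, but the mechanism and the $2(k+1)$ count are identical to what the paper gives.
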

\begin{proof} Take the toll sequence corresponding to $C_j$ and contract its $k+1$ edges; i.e., $(k-1)$ internal edges and 2 edges at both the sides of the toll sequence. Now refine it so that corresponding $T$s move in $C_j$ and $T'$s stay in the left. This takes $2(k+1)$ contractions and refinements. \qed \end{proof}

From Lemma \ref{swapT}, if the exact cover of S exists, then $6q$ trees can be transformed by $2q(k+1)$ contractions and refinements. Remaining $6q$ $T'$s can be transformed into $T$s by $12qk$ contractions and refinements. Hence, we have the following lemma.

\begin{lemma} If set $S$ has an exact cover then the RF distance between $\mathcal{T}_1$ and $\mathcal{T}_2$ is $\kappa = 2q(k+1) + 12kq$. \end{lemma}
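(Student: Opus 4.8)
The plan is to establish the claimed equality in two halves: an upper bound $RF(\mathcal{T}_1,\mathcal{T}_2) \le \kappa$, obtained by writing down an explicit transformation of length exactly $\kappa$ that uses the exact cover, and a matching lower bound $RF(\mathcal{T}_1,\mathcal{T}_2) \ge \kappa$, which is where the toll-sequence gadgets do their work. The upper bound is essentially the ``clever'' swapping scheme already sketched above, made precise.

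For the upper bound, fix an exact cover $C_{i_1},\dots,C_{i_q}$ of $S$. Because the cover is exact, the $3q$ elements of $S$ are partitioned among the $q$ triples, so each of the $3q$ copies of $T$ on the left of $\mathcal{T}_1$ corresponds to exactly one element position of one cover set $C_{i_j}$. For each $j\in\{i_1,\dots,i_q\}$ I would apply Lemma~\ref{swapT} to the toll sequence of $C_j$: contract its $k+1$ edges and refine so that the three $T'$s sitting in the $C_j$-block on the right are exchanged with the three $T$s for $s_{j_1},s_{j_2},s_{j_3}$ on the left, at cost $2(k+1)$ each. Performing this for all $q$ cover sets costs $2q(k+1)$, and since the cover is exact and its members are disjoint, afterwards the left side holds exactly $3q$ copies of $T'$ (as $\mathcal{T}_2$ requires) and the $q$ cover-blocks on the right hold the correct copies of $T$. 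What remains incorrect are the $m-q=2q$ non-cover blocks on the right, still carrying $6q$ copies of $T'$ that must be turned into $T$; by Lemma~\ref{twoT} each conversion costs $2k$, for a total of $12qk$. One also has to check that the constantly many backbone vertices and the untouched toll sequences are already identical in $\mathcal{T}_1$ and $\mathcal{T}_2$, so no further operations are needed; then the total is $2q(k+1)+12qk=\kappa$, giving $RF(\mathcal{T}_1,\mathcal{T}_2)\le\kappa$.

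For the lower bound I would argue that no transformation can do better than $\kappa$, independently of whether a cover exists. The idea is that each of the $12q$ element-trees that differ between $\mathcal{T}_1$ and $\mathcal{T}_2$ must either be converted in place, paying $2k$ by (the tightness of) Lemma~\ref{twoT}, or be relocated across the backbone, which forces a toll sequence to be disassembled and reassembled at cost at least $2(k+1)$; and because $k+2\ge n^2$ and $k+2=2^t$, a toll sequence can amortize the relocation of at most one block of three element-trees, so relocating the $3q$ misplaced left-side $T$s requires at least $q$ toll sequences to be reorganized and leaves at least $6q$ in-place conversions of $T'$ into $T$ — and this bound of $q$ toll sequences is met only when the blocks of relocated trees coincide with triples $C_j$ forming an exact cover of $S$. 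The main obstacle is turning this counting intuition into a clean inequality: one must rule out cleverer sequences that partially rearrange a tree and a toll sequence simultaneously, or that reuse intermediate structure across blocks. I expect this needs a charging/potential argument that tracks how each single contraction or refinement changes the symmetric difference of the cluster sets of the current tree and $\mathcal{T}_2$, using $k\ge n^2-2$ to make every ``sub-toll'' shortcut provably no cheaper than the honest solution. Combining the two bounds yields $RF(\mathcal{T}_1,\mathcal{T}_2)=\kappa$.
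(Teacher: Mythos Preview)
Your upper-bound half is exactly the paper's argument: use Lemma~\ref{swapT} once for each of the $q$ cover blocks to swap in the three $T$s at cost $2(k+1)$ each, then convert the remaining $6q$ copies of $T'$ in the non-cover blocks into $T$s at cost $2k$ each via Lemma~\ref{twoT}, for a total of $2q(k+1)+12kq$. That is the entirety of what the paper offers as justification for this lemma; despite the word ``is'' in the statement, the paper only exhibits this transformation and hence only proves $RF(\mathcal{T}_1,\mathcal{T}_2)\le\kappa$ here.

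Where you depart from the paper is in attempting a matching lower bound $RF(\mathcal{T}_1,\mathcal{T}_2)\ge\kappa$ as part of this lemma. The paper does not do this: all lower-bound reasoning is pushed to the subsequent theorem, and even there it is only the informal sentence that without an exact cover ``either more than $6q$ trees are transformed separately or more than $q$ group swaps are performed,'' with no charging or potential argument. So your lower-bound sketch is strictly more than the paper provides, and you correctly flag that making it rigorous (ruling out hybrid strategies that partially dismantle a toll sequence while partially editing an element-tree) is the nontrivial step. For the purposes of matching the paper's proof of \emph{this} lemma, the upper bound alone suffices; your extra half is aimed at a gap the paper itself leaves open.
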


If there is no exact cover of $S$, then either more than $6q$ trees ($T$ or $T'$) are transformed separately or more than $q$ group swaps are performed. The construction guarantees that both cases will cost more than the cost of transforming ($\mathcal{T}_1$ into $\mathcal{T}_2$) in exact cover case. Hence, we conclude the following.

\begin{theorem} Set $S$ has no exact cover if and only if the RF distance between $\mathcal{T}_1$ and $\mathcal{T}_2$ is more than $\kappa = 2q(k+1) + 12kq$. \end{theorem}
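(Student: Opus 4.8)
The plan is to prove the substantive direction in contrapositive form: if $RF(\mathcal{T}_1,\mathcal{T}_2)\le\kappa$ then $S$ admits an exact cover. The reverse implication is immediate, since the preceding lemma gives $RF(\mathcal{T}_1,\mathcal{T}_2)=\kappa$ whenever a cover exists, so $RF(\mathcal{T}_1,\mathcal{T}_2)>\kappa$ rules one out. Using the min-over-full-differentiations characterization of Ganapathy et al.~\cite{Ganapathy:2006:PIB}, I would first fix mutually consistent full differentiations $\mathbf{T}_1,\mathbf{T}_2$ of $\mathcal{T}_1,\mathcal{T}_2$ with $RF(\mathbf{T}_1,\mathbf{T}_2)=RF(\mathcal{T}_1,\mathcal{T}_2)\le\kappa$, and from there work entirely with the ordinary cluster-based RF distance, lower-bounding it by exhibiting many clusters of $\mathbf{T}_1$ absent from $\mathbf{T}_2$.

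Consistency lines up, for every $s_i$, the four leaf-label copies of $s_i$'s gadget in $\mathcal{T}_1$ with those in $\mathcal{T}_2$, inducing a bijection between the $4n=12q$ gadget copies of $\mathcal{T}_1$ (one copy of $T$ per $s_i$ on the left, one copy of $T'$ per incidence $s_i\in C_j$ on the right) and those of $\mathcal{T}_2$; on a matched pair, $\mathbf{T}_1$ and $\mathbf{T}_2$ restrict, on the same leaf set, to full differentiations of $T$ and of $T'$ in one order or the other. For each $s_i$ exactly one of its four $\mathcal{T}_1$-copies is matched to the unique \emph{left} $\mathcal{T}_2$-copy for $s_i$; call $s_i$ \emph{swapped} if that copy is not $s_i$'s original left gadget, and let $\mu$ be the number of swapped elements, so $0\le\mu\le 3q$. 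A short case check shows an unswapped $s_i$ yields four gadget pairs that stay on the same side of the tree --- hence are matched with shape reversed ($T$ against $T'$, the sides carrying opposite shapes in $\mathcal{T}_1$ and $\mathcal{T}_2$) --- whereas a swapped $s_i$ yields exactly two gadget pairs that change side (its left $T$ moving right, one right $T'$ moving left), matched shape-preservingly; thus exactly $12q-2\mu$ gadget pairs are shape-reversing. By Lemma~\ref{twoT} each shape-reversing pair forces at least $RF(T,T')=2k$ of its internal clusters to differ, and since after differentiation distinct gadget copies span disjoint leaf sets --- and a proper subset of one gadget's leaves can be a cluster only at a vertex inside that gadget --- these $\ge 2k(12q-2\mu)$ mismatched clusters are pairwise distinct.

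The next step is to charge the remaining cost to the backbone and toll sequences. I would show that relocating the $2\mu$ side-changing gadgets and rebuilding $\mathcal{T}_2$'s backbone contributes at least $2(k+1)\lceil\mu/3\rceil$ further mismatched clusters, none coinciding with the gadget-internal ones above (these being the only clusters supported on toll leaves): a single toll sequence sits adjacent to exactly the three gadgets of one $C_j$, so one toll sequence can reroute at most three migrating left-gadgets, and a $k$-leaf toll sequence whose local structure changes costs $\ge 2(k+1)$ (matching Lemma~\ref{swapT}); moreover attaining exactly $2(k+1)$ per toll sequence forces the three rerouted gadgets to be precisely those of some $C_j$, any other triple demanding extra backbone contractions and refinements. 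Combining, $RF(\mathbf{T}_1,\mathbf{T}_2)\ge 2k(12q-2\mu)+2(k+1)\lceil\mu/3\rceil$; this strictly decreases in $\mu$ (over multiples of $3$, since $-12k+2(k+1)<0$), so over $0\le\mu\le 3q$ its minimum $12qk+2q(k+1)=\kappa$ is reached only at $\mu=3q$, and there only when the $q$ rerouted triples all lie in $C$ --- i.e.\ when those $q$ sets partition $S$, hence form an exact cover. As $RF(\mathbf{T}_1,\mathbf{T}_2)\le\kappa$ by hypothesis, equality must hold and the exact cover can be read off the matching.

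The main obstacle is the toll-sequence/backbone lower bound invoked above: rigorously ruling out that a rerouted triple which is not one of the $C_j$'s stays within the $2(k+1)$-per-toll-sequence budget. I would prove it by isolating, in the spirit of Lemma~\ref{twoT} but for the backbone connector vertices of Figure~\ref{fig:main} rather than the $T$/$T'$ gadgets, exactly which backbone clusters of $\mathbf{T}_1$ cannot survive in $\mathbf{T}_2$ under a given matching, and counting them against the matching's ``profile'' $\mu$. The hypothesis $k+2\ge n^2$ enters precisely here: the backbone carries only $O(n)$ edges, so no rearrangement of it is ever cheap enough to offset even one extra shape-reversing gadget pair, whose cost $2k$ dwarfs the whole backbone; this is what lets the threshold $\kappa$ separate the cover and no-cover cases with no slack to spare.
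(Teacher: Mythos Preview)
Your plan is considerably more explicit than the paper's own argument, which simply asserts that ``the construction guarantees'' the cost exceeds $\kappa$ whenever no cover exists, framed operationally in terms of either transforming more than $6q$ gadgets directly or performing more than $q$ group swaps. Your route through Ganapathy et al.'s full-differentiation characterization, followed by a direct count of mismatched clusters split between gadget interiors and the backbone, is the natural way to turn that assertion into a proof, and your arithmetic (the function $2k(12q-2\mu)+2(k+1)\lceil\mu/3\rceil$ decreasing in $\mu$ and hitting $\kappa$ only at $\mu=3q$) checks out.

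There is, however, a genuine gap at the very first step. You write that consistency ``induc[es] a bijection between the $4n=12q$ gadget copies of $\mathcal{T}_1$ \dots\ and those of $\mathcal{T}_2$,'' and your entire swapped/unswapped bookkeeping, the count $12q-2\mu$ of shape-reversing pairs, and the invocation of Lemma~\ref{twoT} all rest on this. But consistency of full differentiations operates leaf by leaf, not gadget by gadget: for each original label $\ell$ appearing four times, the differentiation may send the four $\ell$-leaves of $\mathcal{T}_1$ to the four $\ell$-leaves of $\mathcal{T}_2$ by \emph{any} bijection, and nothing forces these bijections to agree across the $k+2$ labels of a gadget. So the $k+2$ differentiated leaves of one $\mathcal{T}_1$-gadget may land in two or more $\mathcal{T}_2$-gadgets. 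The fix is not hard---any gadget whose leaves are scattered in $\mathbf{T}_2$ has all $k$ of its internal clusters unmatched (the least $\mathbf{T}_2$-cluster containing a cross-gadget leaf set lives at a backbone vertex and picks up toll leaves), so scrambled differentiations are never cheaper than shape-reversing block matchings and can be assumed away without loss---but this reduction to block matchings must be argued, not asserted.

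The second obstacle you flag, the backbone lower bound $2(k+1)\lceil\mu/3\rceil$ with equality forcing each rerouted triple to coincide with some $C_j$, is indeed the crux; the paper does not prove it either, and your plan to isolate which backbone/toll clusters are destroyed under a given block matching, using $k+2\ge n^2$ to dominate all $O(n)$ backbone edges by a single gadget's $2k$ cost, is the right line of attack.
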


\subsection*{Other Proofs}

\begin{proof}[Theorem \ref{thm:diffr}] Let the given input mul-tree $\mathcal{T}$ is such that $\mathcal{T} := (T,M,\varphi)$. We prove the Theorem by showing that for each $a \in M$, where $|\varphi^{-1}(a)| = k$, all $k!$ ways of uniquely relabeling corresponding $k$ leaves in both $\mathcal{T}$ and $\mathcal{S}$ result into the same number of matched and unmatched splits in the corresponding mutually consistent full differentiations. The set of splits in $\mathcal{T}$ can be divided into two categories:
\begin{itemize}
  \item \emph{Category 1:} Splits that have all the leaves labeled with $a$ in one part. Such a split will always have a match irrespective of the labeling.
  \item \emph{Category 2:} The remaining splits. Such splits are not present in $\mathcal{S}$, therefore, they will never have a match irrespective of the labeling.
  \qed
\end{itemize}
\end{proof}

\begin{proof}[Observation 1] Let the extension of $S$ be $\mathcal{S} := (T',M',\varphi')$. Let $\mathbf{S}$ be a full differentiation of $\mathcal{S}$ that is consistent with $\mathbf{T}$, where $T'$ and $\mathbf{S}$ are isomorphic under bijection $\tau : V(T') \rightarrow V(\mathbf{S})$.

For $Z \in \{X,Y\}$, let $\mathbf{S}[Z] = \{l \in \mathcal{L}(\mathbf{S}): \varphi'(\tau^{-1}(l)) \in \mathcal{L}(Z)\}$.

Since, $\mathcal{L}(\mathbf{S}_{|\mathcal{L}(\mathbf{T})}) \cap \mathbf{S}[Z] = \emptyset$,  $RF(\mathbf{S}_{|\mathcal{L}(\mathbf{T})},\mathbf{T}) =$ $RF(\mathbf{S}'_{|\mathcal{L}(\mathbf{T})},\mathbf{T})$. Now, $RF(\mathcal{S},\mathcal{T}) =$ $RF(\mathbf{S},\mathbf{T}) =$ $RF(\mathbf{S}_{|\mathcal{L}(\mathbf{T})},\mathbf{T}) =$ $RF(\mathbf{S}'_{|\mathcal{L}(\mathbf{T})},\mathbf{T}) =$ $RF(\mathbf{S}',\mathbf{T})$ = $RF(\mathcal{S}',$ $\mathcal{T})$.  \qed
\end{proof}

\begin{proof}[Lemma \ref{lm:NNI-RF}]
\begin{align}
RF(\mathbb{S}'',\mathbb{T}) &= |\mathcal{L}(\mathbb{T})|-|I(\mathbb{T})| -2 + 2|\mathcal{F}_{\mathbb{S}''}| \nonumber\\
&=|\mathcal{L}(\mathbb{T})|-|I(\mathbb{T})| -2 \nonumber\\
  & \hspace*{4mm} + 2|\{u \in I(\mathbb{T}): f_{\mathbb{S}''}(u) = 0\}| \nonumber\\
&=|\mathcal{L}(\mathbb{T})|-|I(\mathbb{T})| -2 + 2|\mathcal{F}_{\mathbb{S}'}| \nonumber\\
 & \hspace*{4mm} -2|\{u \in H: f_{\mathbb{S}'}(u) = 0 \hspace*{1mm} \& \hspace*{1mm} f_{\mathbb{S}''}(u) \geq 1 \}| \nonumber\\
 & \hspace*{4mm} + 2|\{u \in H: f_{\mathbb{S}''}(u) = 0 \hspace*{1mm} \& \hspace*{1mm} f_{\mathbb{S}'}(u) \geq 1 \}| \nonumber\\
&=RF(\mathbb{S}',\mathbb{T}) - 2|G| + 2|L|  \nonumber
\end{align}   \qed
\end{proof}

\begin{proof}[Lemma \ref{lm:comp}] The RF distance computation for $\overline{S}$, obtained by pruning $Y$ and regrafting at a leaf in $X$, can be done in $\Theta(n)$ time. After $\overline{S}$, the RF distance for each tree $S'$, obtained by regrafting $Y$ on each edge in $X$, can be computed in constant time by performing regrafts in the order of $\aleph$. There are $\Theta(n)$ edges in $\aleph$, thus the RF computation for all the trees can be done in $\Theta(n)$ time. The same argument applies for pruning $X$ and regrafting on the edges in $Y$.  \qed \end{proof}

\begin{proof}[Theorem \ref{tm:main}] There are $\Theta(n)$ internal edges in $S$. For each edge $\{x,y\}$ in $S$, where $X$, $Y$ be two resulting subtrees containing $x$, $y$, respectively. The RF distance for all the trees obtained by regrafting $X$ (or $Y$) on each edge in $Y$ (or $X$) can be computed in $\Theta(n)$ time from Lemma \ref{lm:comp}. Thus for $k$ input trees the RF distance can be checked in $\Theta(nk)$ time. The total time over all $\Theta(n)$ internal edges is $\Theta(n^2k)$. \qed \end{proof}

\end{document}